\DeclareMathAlphabet{\mathcalligra}{OT1}{pzc}{m}{it}
\renewcommand{\phi}{\varphi}
\renewcommand{\theta}{\vartheta}
\renewcommand{\epsilon}{\varepsilon}
\newcommand{\mychi}{\protect{\raisebox{0.15em}{$\chi$}}}
\newcommand{\field}[1]{\mathbb{#1}} 
\newcommand{\R}{\field{R}}
\newcommand{\Z}{\field{Z}}
\newcommand{\N}{\field{N}}
\newcommand{\CC}{\mathcal{C}}
\newcommand{\EE}{\mathcal{E}}
\newcommand{\DD}{\mathcal{D}}
\newcommand{\QQ}{\mathcal{Q}}
\newcommand{\FF}{\mathcal{F}}
\newcommand{\WW}{\mathcal{W}}
\newcommand{\VV}{\mathcal{V}}
\newcommand{\De}{\DD^{(1)}_\epsilon}
\newcommand{\MM}{\mathcal{M}}
\newcommand{\Rc}{\mathcal{R}}
\DeclareMathOperator{\dist}{dist}
\DeclareMathOperator{\inter}{int}
\DeclareMathOperator{\Imm}{Im}
\DeclareMathOperator{\arccot}{arccot}
\providecommand{\abs}[1]{\left\lvert#1\right\rvert}                             
\providecommand{\norm}[1]{\left\lVert#1\right\rVert}
\newcommand{\dert}[1]{\dot{#1}} 
\newcommand*{\diff}{\mathop{}\!\mathrm{d}}
\newcommand{\dd}{\diff}
\newcommand{\Der}[2]{\frac{\diff #1}{\diff #2}}
\newtheorem{theorem}{Theorem}
\newtheorem{lemma}[theorem]{Lemma}
\newtheorem{prop}[theorem]{Proposition}
\theoremstyle{definition}
\numberwithin{equation}{section}
\title{On the genesis of directional friction through bristle-like mediating elements}
\author[1,\footnote{Corresponding author -- e-mail address:\,\texttt{pgidoni@sissa.it} -- phone:\,+39\,040\,3787\,591}]{Paolo Gidoni}
\author[1,2]{Antonio DeSimone}
\affil[1]{\small{SISSA - International School for Advanced Studies, via Bonomea 265, 34136 Trieste, Italy.} \smallskip}
\affil[2]{\small{GSSI - Gran Sasso Science Institute, viale Francesco Crispi 7, 67100 L'Aquila, Italy.} \smallskip}
\begin{document}

\maketitle

\begin{abstract}
We propose an explanation of the genesis of directional dry friction, as emergent property of the oscillations produced in a bristle-like mediating element by the interaction with microscale fluctuations on the surface.

Mathematically, we extend a convergence result by Mielke, for Prandtl--Tomlinson-like systems, considering also non-homothetic scalings of a wiggly potential. This allows us to apply the result to some simple mechanical models, that exemplify the interaction of a bristle with a surface having small fluctuations. We find that the resulting friction is the product of two factors: a geometric one, depending on the bristle angle and on the fluctuation profile, and a energetic one, proportional to the normal force exchanged between the bristle-like element and the surface. Finally, we apply our result to discuss the with the nap/against the nap asymmetry.
\end{abstract}
\textbf{Keywords: }\textit{Directional dry friction, rate-independent systems, wiggly energy landscape, with the nap/against the nap asymmetry, energy-dissipation principle.}\\
\textbf{MSC (2010):} 74A55, 74M10, 74N30

\section{Introduction}

Modelling frictional interactions is a challenging task, both for the variety of behaviours experimentally observed, and for the relevance of such phenomena in the study and control of mechanical devices. 
The common strategy consists of a multiscale approach, where the frictional behaviour is an emergent macroscopic property of  mechanical interactions between the asperities of the two surfaces occurring at microscale \cite{aBS09}. Such interactions are often described by modelling the asperities with simple mechanical systems, such as springs and bristles \cite{DriBer,deW95,HaeFri}.

A classical example of such multiscale approach is the Prandtl-Tomlinson model, developed to explain the genesis of Coulomb dry friction \cite{Pop10,PopGre14}.
The model considers the motion of a point mass along a sinusoidal potential, subject to an external driving force and a viscous damping, showing convergence to a dry friction behaviour when the sinusoidal oscillations decrease homothetically. This scenario can be related to the interaction of a single asperity of the upper surface with a rigid rough lower surface. Such representation applies also to the interaction of the cantilever with the surface in a friction force microscope. 

\medbreak

In this paper we follow this multiscale paradigm to propose an explanation of the genesis of a directional asymmetry in the coefficients of Coulomb dry friction, in situations where the interaction between the two surfaces is mediated by bristle-like elements.

Our work is motivated by a growing interest in the modelling and development of  crawling locomotors, exploiting an asymmetry in the friction coefficients \cite{DeSGidNos15,DeSTat12,GidDeS16,GidNosDeS14,ZimZei07}. Such directionality of frictional forces is common both in Nature \cite{Mil88} and in bio-inspired robots \cite{Mah04,Men06,NosDeS14}, and is usually obtained thanks to elastic elements,  such as oblique filaments or bristles (e.g.~the \emph{setae} in earthworms), that mediate the interaction between the crawler and the surface \cite{Han}.

Our starting point is the paper  \cite{Mie12} by Mielke.  Here, it is shown that the quasi-static behaviour of a family of Prandtl--Tomlinson-like systems, in which the fluctuation in the potential decreases homothetically, converge to that of a particle subject to dry friction. Moreover, the leftwards and rightwards friction coefficients coincide with the minimum and maximum of the derivative of the oscillating potential. In this way, a directionality in the friction is produced simply by assuming a suitable asymmetry in the potential. As we discuss in Section \ref{sec:strat}, a key element in this approach is the change in the nature of the dissipation, from viscous in the approximating systems to rate-independent in the limit one.

\medbreak

We apply these ideas to study the limit behaviour of systems characterized by a mediating, bristle-like element, interacting with a wiggly surface whose periodic fluctuations scale homothetically to zero. In this way the wiggly potential is generated by the small oscillations in the mediating element, induced by the fluctuations of the surface.
Moreover an asymmetry in the wiggly potential can be simply produced by  an asymmetry in the mediating element (e.g.~the inclination of a bristle), also in the case of a symmetric surface.

In order to apply Mielke's approach to our problem, we need to extend his framework to more general families of approximating systems, in which the scaling of the wiggly potential is no longer homothetic, but contains also a nonlinear term (cf.~eq.~\eqref{eq:defV}). This is our first result, presented in Section \ref{sec:abstr} (Theorem \ref{th:main}), and constitutes the abstract contribution of this paper.

\medbreak

From the point of view of applications, our main result is to provide some physical insight into the origin of directional friction. This is obtained by constructing some concrete examples of simple mechanical systems producing, in the limit, directional dry friction, and by interpreting the origin of this frictional asymmetry in terms of the parameters characterizing each example.

The friction coefficients we obtain are the product of two factors. The first one is ``geometric'': it contains the asymmetry of the system and is determined only by the roughness of the surface and by the angle of the mediating bristle-like element. The second factor is   instead ``energetic'': it depends on the limit energetic state of the mediating element, but not on the direction of motion. This last coefficient is proportional to the normal force exerted, at the limit, by the mediating element on the surface. In this way we recover the classical structure of Coulomb friction law, where the friction force is the product of a coefficient characteristic of the surfaces and the modulus of the normal forces exchanged between them.

Our results are then used to discuss the \emph{with the nap/against the nap} asymmetry. 
As we will argue better in Section \ref{sec:contropelo}, our intuition of such asymmetry actually includes under the same name several distinct phenomena, producing the same kind of directionality. Despite the complex behaviour that can be showed by a bristle, our model of Section \ref{sec:3m} can be used to outline two fundamental effects, corresponding to changes in the two factors that characterize the friction coefficients. The \emph{geometric effect} occurs when the bristle keeps the same configuration during the two phases (with and against the nap), and the directionality is due to the inclination of the bristle, that in this way ``perceives'' a symmetric fluctuation of the surface as asymmetric. 
The \emph{energetic effect} applies to situations where the configuration of the bristle flips when the velocity changes sign, so that the tip of the bristle is always behind its root with respect to the direction of motion. In this case the geometric component is unchanged, but the bristle switches between two different energetic states, exerting a different normal force on the surface.

Finally, we notice that the behaviour of the model of Section \ref{sec:3m} has a close resemblance to that observed experimentally for the robotic crawler developed in \cite{NosDeS14}. There, slanted bristles, interacting with a groove-textured surface, are used to obtain net displacement, when the body of the crawler performs a cycle of elongation and contraction.
 The bristle-surface interaction produces an oscillatory friction force, and it is shown that 
the system can be effectively discussed considering supports moving on a flat surface and experiencing a constant average friction force. 
  Such result supports our approach and encourages a future experimental validation of the predictions of our models.  


\section{Abstract setting} \label{sec:abstr}

In this section we show that the evolution of a prototype one dimensional rate independent system, with energy $\EE$ and a  dissipation potential $\Rc$ positively homogeneous of degree $1$, can be constructed 
 as the limit of the evolutions of a family of systems $(\EE_\epsilon,\Rc_\epsilon)$, where $\EE_\epsilon=\EE+V_\epsilon$, with $V_\epsilon$ an oscillatory (``wiggly'') small perturbation, and $\R_\epsilon$ is a small viscous dissipation potential.
The system $(\EE_\epsilon,\Rc_\epsilon)$ will describe a motion on an undulatory surface with vanishing small roughness, while the system $(\EE,\Rc)$ describes motion on a flat surface with directional dry friction.

\medbreak

Let us therefore consider a mechanical system having internal energy
\begin{equation} \label{eq:defE}
\EE(t,z)=\Phi(z)-\ell (t)z 
\end{equation}
where $t\in[0,T]$ represent the time and  $z\in \R$ is a one-dimensional state variable.
We assume that $\Phi\in \CC^2(\R,\R)$ is a uniformly convex function, while $\ell\in \CC^1([0,T],\R)$.
The dissipative effects of a change in the state of the system is described by the dissipation potential
\begin{equation} \label{eq:defRlim}
\Rc(v)=\begin{cases}
\rho_+ v &\text{for $v\geq 0$}\\
\rho_- v &\text{for $v\leq 0$}
\end{cases}
\end{equation}
where $\rho_-<0$ and $\rho_+>0$ are suitable constants.
We consider the quasi-static evolution of the system, described by
\begin{equation}
0\in \partial_{\dert z} \Rc({\dert z})+D_z\EE(t,z)
\label{eq:din_lim}
\end{equation}
where the dot $\dert{}$ denotes the derivative with respect to the time variable $t$, $\partial_{\dert z}$ denotes the subdifferential with respect to $\dert z$ and  $D_z$ denotes the derivative in the $z$ variable (below also denoted briefly, when not ambiguous, with a prime $'$).

\medbreak

Similarly  we introduce the following family of perturbed systems depending on a small parameter $\epsilon$.
The energy of these systems is obtained by adding to $\EE$ a small wiggly perturbation. More precisely we have
\begin{equation} \label{eq:defEeps}
\EE_\epsilon (t,z)=\Phi(z)-\ell (t)z + V_\epsilon(z) 
\end{equation}
with
\begin{equation}
V_\epsilon(z)= \epsilon W\left(\frac{z}{\epsilon}\right)+ \epsilon^2 Q\left(\epsilon;\frac{z}{\epsilon}\right)
\label{eq:defV}
\end{equation}
Here $W\in \CC^2(\R,\R)$ is a $1$-periodic (non-constant) function; whereas $Q\colon (0,\epsilon_Q)\times \R\to \R$, for some $\epsilon_Q>0$, is $1$-periodic and $\CC^2$ in the second variable. Moreover we assume the existence of two  positive constants $C_{Q,0}$ and $C_{Q,1}$ such that, for every $0<\epsilon<\epsilon_Q$ and for every $y\in \R$ we have
\begin{align}
\abs{Q(\epsilon;y)}<C_{Q,0} && \abs{Q'(\epsilon;y)}<C_{Q,1} 
\end{align}
where the prime $'$ denotes the derivative with respect to the second variable~$y$.

\medbreak

The systems are subject to a viscous friction, described by the Rayleigh dissipation potential
\begin{equation}\label{eq:defReps}
\Rc_\epsilon(\dert z)=\frac{\epsilon^\gamma}{2}{\dert z}^2 \qquad \text{for some $\gamma>0$}
\end{equation} 
and their (quasi-static) evolution is described by the equation
\begin{equation}
0=D_{\dert z} \Rc_\epsilon(\dert z)+D_z\EE_\epsilon(t,z)
\label{eq:din_eps}
\end{equation}

We are going to show that the behaviour of the system \eqref{eq:din_lim} is approximated, for $\epsilon\to 0$, by that of the systems \eqref{eq:din_eps}. To do so, a last assumption is needed, in order to link the two situations. Namely, we require
\begin{align}
\rho_+&=\max W'(z)>0 &
\rho_-&=\min  W'(z)<0 
\end{align}
We are now ready to state the main result of this section.

\begin{theorem}\label{th:main}
In the framework described above, let $z_\epsilon \colon [0,T]\to \R$ be a family of solutions of \eqref{eq:din_eps}, such that
\begin{equation}
z_\epsilon(0)\to z^0\in (\Phi')^{-1} \left ( [\ell (0)-\rho_+, \ell (0)-\rho_- ]\right)
\label{eq:cond_zetazero}
\end{equation}
Then, the differential inclusion \eqref{eq:din_lim} has a unique solution $\bar z\colon [0,T]\to \R$ for the initial conditions $\bar z(0)=z^0$. Moreover, for $\epsilon\to 0$, this solution satisfies
\begin{gather}
z_\epsilon\to \bar z \qquad \text{in $\CC^0([0,T])$}\\
\int_{t_1}^{t_2}2\Rc_\epsilon(\dert z_\epsilon(t))\dd t \to \int_{t_1}^{t_2}\Rc(\dert{\bar z}(t))\dd t
\qquad \text{for every $0\leq t_1<t_2\leq T$} \label{eq:th_Rconv}
\end{gather}
\end{theorem}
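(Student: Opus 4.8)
The plan is to follow Mielke's strategy for the homothetic case and track where the new nonlinear term $\epsilon^2 Q(\epsilon;z/\epsilon)$ enters. The starting observation is that, since $\Phi$ is uniformly convex and $W,Q$ have bounded derivatives, the equation \eqref{eq:din_eps} can be rewritten as $\epsilon^\gamma \dert z_\epsilon = \ell(t) - \Phi'(z_\epsilon) - W'(z_\epsilon/\epsilon) - \epsilon\, Q'(\epsilon;z_\epsilon/\epsilon)$, which by Picard--Lindelöf admits a unique global solution on $[0,T]$; a priori bounds on $z_\epsilon$ and $\dert z_\epsilon$ (uniform in $\epsilon$) follow from uniform convexity of $\Phi$ together with the uniform bounds on $W',Q'$. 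This gives, via Arzelà--Ascoli, a subsequence converging uniformly to some limit $\bar z$. The main work is to identify $\bar z$ as \emph{the} solution of \eqref{eq:din_lim} and to upgrade to convergence of the dissipation \eqref{eq:th_Rconv}.

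First I would establish the ``trapping'' / pinning mechanism, which is the geometric heart of the argument. Define the (time-dependent) interval $I(t) = (\Phi')^{-1}([\ell(t)-\rho_+,\ell(t)-\rho_-])$. The claim is that $\dist(z_\epsilon(t), I(t)) \to 0$ uniformly: whenever $z_\epsilon$ lies strictly outside $I(t)$, the ``slow'' force $\ell(t)-\Phi'(z_\epsilon)$ exceeds $\rho_+=\max W'$ (or is below $\rho_-=\min W'$), so the right-hand side of the evolution equation has a sign, and $z_\epsilon$ is driven monotonically back toward $I(t)$ at a rate bounded below — here one uses that the correction $\epsilon Q'$ is $O(\epsilon)$ and hence negligible against the strict gap for $\epsilon$ small. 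Conversely, once $z_\epsilon$ enters a neighbourhood of $I(t)$, the wiggly potential has a local minimum nearby (because $W'$ attains every value in $[\rho_-,\rho_+]$ and, again, $\epsilon Q'$ is a small perturbation that does not destroy these critical points for $\epsilon$ small — this is precisely the step where the non-homothetic term must be controlled), and $z_\epsilon$ gets pinned there, moving only at speed $O(\dot\ell)$ as the well drifts. This yields that any uniform limit $\bar z$ satisfies $\bar z(t)\in I(t)$, equivalently $\ell(t)-\Phi'(\bar z(t))\in[\rho_-,\rho_+]$, which is exactly the ``sticking'' branch of \eqref{eq:din_lim}, and that $\bar z$ can only increase when $\Phi'(\bar z)=\ell-\rho_+$ and decrease when $\Phi'(\bar z)=\ell-\rho_-$, which is the ``sliding'' branch. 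Uniqueness of $\bar z$ for the given initial datum follows from the standard theory of rate-independent systems with uniformly convex energy (monotonicity of $\Phi'$), so the whole family — not just a subsequence — converges.

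For the energy statement \eqref{eq:th_Rconv} I would use the energy--dissipation balance. Testing \eqref{eq:din_eps} with $\dert z_\epsilon$ gives $\int_{t_1}^{t_2} 2\Rc_\epsilon(\dert z_\epsilon)\,\dd t = \EE_\epsilon(t_1,z_\epsilon(t_1)) - \EE_\epsilon(t_2,z_\epsilon(t_2)) - \int_{t_1}^{t_2}\dot\ell\, z_\epsilon\,\dd t$ (power identity), and similarly $\int_{t_1}^{t_2}\Rc(\dert{\bar z})\,\dd t = \EE(t_1,\bar z(t_1)) - \EE(t_2,\bar z(t_2)) - \int_{t_1}^{t_2}\dot\ell\,\bar z\,\dd t$ from the differential inclusion \eqref{eq:din_lim}. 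Since $z_\epsilon\to\bar z$ uniformly and $\sup_y(\epsilon|W(y/\epsilon)| + \epsilon^2|Q(\epsilon;y/\epsilon)|)\le \epsilon\norm{W}_\infty + \epsilon^2 C_{Q,0}\to 0$, we have $V_\epsilon\to 0$ uniformly, hence $\EE_\epsilon(t_i,z_\epsilon(t_i))\to\EE(t_i,\bar z(t_i))$; the driving integrals converge by dominated convergence. Subtracting the two balances gives \eqref{eq:th_Rconv}.

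The main obstacle is the pinning step in the presence of the non-homothetic correction: in Mielke's original setting the wells of the effective potential are exactly the zeros of $W' + (\text{slow force})$, but here they are zeros of $W' + \epsilon Q'(\epsilon;\cdot) + (\text{slow force})$, and one must argue that for $\epsilon$ small these persist and remain attracting on the relevant $O(\epsilon)$ spatial scale, with the residence time in each well and the jump across barriers controlled uniformly. The bounds $|Q|<C_{Q,0}$, $|Q'|<C_{Q,1}$ are exactly what is needed: they guarantee $\epsilon^2 Q(\epsilon;\cdot/\epsilon)$ and $\epsilon Q'(\epsilon;\cdot/\epsilon)$ contribute, respectively, $O(\epsilon^2)$ to energies and $O(\epsilon)$ to forces, i.e.\ one order smaller than the leading wiggly terms $\epsilon W(\cdot/\epsilon)$ and $W'(\cdot/\epsilon)$, so that the barrier-crossing / trapping estimates go through verbatim after absorbing these lower-order remainders. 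I would also need to double-check that the threshold set in \eqref{eq:cond_zetazero} is stated with $\ell(0)$, matching $I(0)$, so that the initial data are asymptotically compatible with the pinned regime — solutions starting outside would first undergo a fast transient of vanishing duration before being captured, which is why the hypothesis is imposed.
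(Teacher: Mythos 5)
Your proposal is essentially correct, but it takes a genuinely different route in the central identification step. The paper never argues by pinning at the level of the ODE: it recasts both the approximating and the limit problems in De Giorgi's $(\Rc,\Rc^*)$ energy--dissipation form (Propositions \ref{prop:DGform}, \ref{prop:DGform_gen}, \ref{prop:conv}) and identifies the limit through a liminf estimate on the dissipation functionals (Lemma \ref{lemma:dis}), whose key ingredient is the averaged density $K(\xi)=\int_0^1\abs{\xi-W'(y)}\dd y$ entering \eqref{eq:defMlim}: since $K(\xi)>\abs{\xi}$ strictly inside $[\rho_-,\rho_+]$, the upper energy estimate inherited in the limit forces the contact set $\CC_\MM$, i.e.\ the flow rule, without ever tracking individual wells. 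You replace this by a threshold/pinning argument: the trapping estimate gives $\ell-\Phi'(\bar z)\in[\rho_-,\rho_+]$, and rightward (resp.\ leftward) motion of $z_\epsilon$ is blocked at the points of each $\epsilon$-period where $W'(z/\epsilon)$ is close to $\rho_+$ (resp.\ $\rho_-$) whenever the slow force stays below (above) the threshold by a fixed margin; in one dimension this can be made rigorous, and the $\epsilon Q'$ term is indeed absorbed exactly as you say. The paper's variational route concentrates the difficulty in a single integral liminf (proved by a partition/averaging argument over the wiggly period), keeps \eqref{eq:th_Rconv} inside the same structure, and is the formulation that scales beyond this scalar setting; your route is more elementary and more ``mechanical'', but pushes all the work into the well-persistence/hopping analysis that you only sketch. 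Note that the paper's compactness step (Lemmas \ref{lemma:est1}--\ref{lemma:est3}) is of the comparison type you propose, and your energy-balance derivation of \eqref{eq:th_Rconv} coincides with the paper's final step.

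There is, however, one stated step that is wrong and must be repaired: uniform-in-$\epsilon$ bounds on $\dert z_\epsilon$ do \emph{not} follow from the uniform convexity of $\Phi$ and the bounds on $W'$, $Q'$. The equation only gives $\epsilon^\gamma\dert z_\epsilon=\ell(t)-\Phi'(z_\epsilon)-W'(z_\epsilon/\epsilon)-\epsilon Q'(\epsilon;z_\epsilon/\epsilon)$, so pointwise $\dert z_\epsilon$ can be of order $\epsilon^{-\gamma}$: this happens in the initial transient (hypothesis \eqref{eq:cond_zetazero} allows $z_\epsilon(0)$ to start outside the elastic strip) and during depinned sliding, where within each $\epsilon$-period the speed peaks at order $\epsilon^{-\gamma}$ even though the average drift is $O(1)$. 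Hence Arzel\`a--Ascoli cannot be invoked via a Lipschitz bound on $\dert z_\epsilon$; one needs genuine equicontinuity estimates of the kind the paper proves: exponential decay of the distance to the strip (Lemma \ref{lemma:est1}), displacement $O(\epsilon^\beta)$ over time windows of length $\epsilon^\beta$ obtained by constructing blocking points $\zeta_\pm\in y_\pm+\epsilon\Z$ where the drift has a sign (Lemma \ref{lemma:est2}), and then patching (Lemma \ref{lemma:est3}). This is precisely the quantitative version of the pinning you describe afterwards, so the gap is reparable within your plan, but as written the compactness step is not justified.
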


Theorem \ref{th:main} is proved in Section \ref{sec:proof}, through a convergence strategy illustrated in Section \ref{sec:strat}.
Let us remark that the right term in \eqref{eq:cond_zetazero}  is well defined since, being $\Phi$ uniformly convex, it follows that $\Phi'$ is globally invertible with range equal to $\R$. 

\medbreak

For our application, it is useful to study an apparently more general situation and show that it actually falls in the framework of Theorem \ref{th:main}.
Let us consider a function $\FF\in \CC^3([-\delta_\FF,\delta_\FF],\R)$ defined in a neighbourhood of zero and such that 
\begin{equation} \label{eq:cond_alpha}
\FF'(0)=\alpha\neq 0
\end{equation}
Let   $\WW\in \CC^2(\R,\R)$ be a $1$-periodic (non-constant) function and set
\begin{align}
\mu_+&=\max \WW'(z)>0  &
\mu_-&=\min  \WW'(z)<0 
\end{align}

We consider also a function $\QQ\colon (0,\tilde\epsilon_\QQ)\times \R\to \R$, defined for some $\tilde\epsilon_\QQ>0$, and such that it is $1$-periodic and $\CC^2$ in the second variable. We assume that there exist two  positive constants $\tilde C_{\QQ,0}$ and $\tilde C_{\QQ,1}$ such that, for every $0<\epsilon<\tilde\epsilon_\QQ$ and for every $y\in \R$, we have
\begin{align}
\abs{\QQ(\epsilon;y)}<\tilde C_{\QQ,0} && \abs{\QQ'(\epsilon;y)}<\tilde C_{\QQ,1} 
\end{align}

Let $\epsilon_\FF$ be small enough to satisfy $\epsilon_\FF \norm{\WW}_\infty+ \epsilon_\FF^2 \tilde C_{\QQ,0}<\delta_\FF$. We now consider, for every positive $\epsilon<\min\{\epsilon_\FF,\epsilon_\QQ\}$, the general wiggly potential $\VV_\epsilon$ defined as
\begin{equation}
\VV_\epsilon(z)= \FF \left[ \epsilon \WW\left(\frac{z}{\epsilon}\right)+ \epsilon^2 \QQ\left(\epsilon;\frac{z}{\epsilon}\right) \right]-\FF(0)
\label{eq:defValt}
\end{equation}

\begin{lemma} \label{lemma:form}
In the framework above, for every wiggly potential $\VV_\epsilon$ of the form \eqref{eq:defValt} there exist two suitable functions $W$ and $Q$, such that $\VV_\epsilon$ can be written in the form \eqref{eq:defV} for sufficiently small $\epsilon>0$. Moreover we have $W(y)=\alpha \WW(y)$ and therefore
\begin{equation} \label{eq:fric_decomp}
\begin{array}{c}  \rho_+=\alpha \mu_+ \\ 
\rho_-=\alpha \mu_- 
\end{array}\quad \text{if $\alpha>0$}  \qquad 
\left(\text{resp.}  \quad
\begin{array}{c}  \rho_+=-\alpha \mu_- \\ 
\rho_-=-\alpha \mu_+ 
\end{array}\quad \text{if $\alpha<0$}  \right)
\end{equation}
\end{lemma}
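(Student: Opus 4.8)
The plan is to Taylor-expand $\FF$ around $0$ to second order and absorb all the terms into the two prescribed shapes $\epsilon W(z/\epsilon)$ and $\epsilon^2 Q(\epsilon;z/\epsilon)$. Writing $y=z/\epsilon$ and $u_\epsilon(y)=\epsilon\WW(y)+\epsilon^2\QQ(\epsilon;y)$, the choice of $\epsilon_\FF$ guarantees $|u_\epsilon(y)|<\delta_\FF$ for all $y$, so $\FF$ is evaluated inside its domain of definition; since $\FF\in\CC^3$, the Taylor formula with integral (or Lagrange) remainder gives
\begin{equation*}
\FF(u_\epsilon(y))-\FF(0)=\FF'(0)\,u_\epsilon(y)+\tfrac12\FF''(\xi_{\epsilon,y})\,u_\epsilon(y)^2
\end{equation*}
for some $\xi_{\epsilon,y}$ between $0$ and $u_\epsilon(y)$. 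Substituting $\FF'(0)=\alpha$ and $u_\epsilon(y)=\epsilon\WW(y)+\epsilon^2\QQ(\epsilon;y)$, the first term splits as $\alpha\epsilon\WW(y)+\alpha\epsilon^2\QQ(\epsilon;y)$; the leading piece $\alpha\epsilon\WW(y)$ is exactly $\epsilon W(y)$ with $W(y)=\alpha\WW(y)$, and everything else is $O(\epsilon^2)$. Collecting these remaining pieces, I would define
\begin{equation*}
Q(\epsilon;y)=\alpha\,\QQ(\epsilon;y)+\tfrac12\FF''(\xi_{\epsilon,y})\Bigl(\WW(y)+\epsilon\QQ(\epsilon;y)\Bigr)^2,
\end{equation*}
so that by construction $\VV_\epsilon(z)=\epsilon W(z/\epsilon)+\epsilon^2 Q(\epsilon;z/\epsilon)$, i.e.\ the form \eqref{eq:defV}.

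It then remains to check that this $W$ and $Q$ satisfy the hypotheses imposed in the abstract setting. For $W=\alpha\WW$ this is immediate: $W\in\CC^2(\R,\R)$, it is $1$-periodic and non-constant (since $\WW$ is non-constant and $\alpha\neq0$), and $\max W'=\alpha\mu_+$, $\min W'=\alpha\mu_-$ if $\alpha>0$, with the roles of $\mu_\pm$ exchanged and signs flipped if $\alpha<0$ — which is precisely \eqref{eq:fric_decomp}. For $Q$ I need: $1$-periodicity in $y$, $\CC^2$ regularity in $y$, and uniform bounds on $|Q|$ and $|Q'|$ for $\epsilon$ in a (possibly smaller) right neighbourhood of $0$. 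Periodicity is clear since $\WW$, $\QQ$ are $1$-periodic and $\xi_{\epsilon,y}$ inherits periodicity from $u_\epsilon$. For the bounds, on the compact $[-\delta_\FF,\delta_\FF]$ one has $\|\FF''\|_\infty<\infty$ and $\|\FF'''\|_\infty<\infty$ (here $\CC^3$ is used), while $\|\WW\|_\infty,\|\WW'\|_\infty$ are finite by continuity and periodicity, and $|\QQ|<\tilde C_{\QQ,0}$, $|\QQ'|<\tilde C_{\QQ,1}$ by assumption; feeding these into the explicit expression for $Q$ and for $Q'$ (obtained by differentiating in $y$, using $\partial_y\xi_{\epsilon,y}$ controlled via $u_\epsilon'$) yields constants $C_{Q,0}$, $C_{Q,1}$ depending only on $\FF$, $\WW$, $\tilde C_{\QQ,\cdot}$ and an upper bound for $\epsilon$.

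The one genuinely delicate point is the regularity and differentiability of the remainder term in $y$: the intermediate point $\xi_{\epsilon,y}$ from Lagrange's form need not depend nicely (let alone $\CC^2$) on $y$. The clean way around this is to use the integral form of the remainder,
\begin{equation*}
\FF(u)-\FF(0)-\FF'(0)u=u^2\int_0^1 (1-s)\,\FF''(su)\,\dd s,
\end{equation*}
and define the function $g(u)=\int_0^1(1-s)\FF''(su)\,\dd s$, which is $\CC^1$ on $[-\delta_\FF,\delta_\FF]$ because $\FF''\in\CC^1$ (differentiating under the integral sign). Then $Q(\epsilon;y)=\alpha\QQ(\epsilon;y)+g(u_\epsilon(y))\bigl(\WW(y)+\epsilon\QQ(\epsilon;y)\bigr)^2$ is manifestly as regular in $y$ as $u_\epsilon$, hence $\CC^2$, and its $y$-derivative is a polynomial expression in $g$, $g'$, $\WW$, $\WW'$, $\QQ$, $\QQ'$ evaluated at controlled arguments, giving the uniform $\CC^1$-bound. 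With this substitution all the required properties follow by routine estimates, completing the proof.
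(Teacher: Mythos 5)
Correct, and essentially the paper's own argument: you Taylor-expand $\FF$ at $0$, read off $W=\alpha\WW$ from the linear term, push the rest into $Q$, and verify periodicity plus the uniform bounds on $Q,Q'$, exactly as the paper does (the paper writes the remainder as $\tfrac{\FF''(0)}{2}u^2+h(u)u^2$ with $h\in\CC^1$, $h(u)\to 0$, instead of your integral form $g(u)u^2$ — a purely cosmetic difference). One small repair to your regularity step: since $\FF\in\CC^3$ only gives $g\in\CC^1$, the factor $g(u_\epsilon(y))$ is \emph{not} ``as regular in $y$ as $u_\epsilon$''; to see that $Q(\epsilon;\cdot)$ is $\CC^2$ you should instead use that $g(u)u^2=\FF(u)-\FF(0)-\alpha u$ is $\CC^3$ in $u$, so that $Q(\epsilon;y)=\alpha\QQ(\epsilon;y)+\epsilon^{-2}\bigl[g(u)u^2\bigr]_{u=u_\epsilon(y)}$ is a combination of compositions of $\CC^2$ maps — precisely the device the paper invokes when it notes that $h(u)u^2$ is twice continuously differentiable.
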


\begin{proof}
We recall that, expanding $\FF$ as a Taylor series, we have
\begin{equation}
\FF(u)-\FF(0)=\alpha u + \frac{\FF''(0)}{2}u^2+h(u)u^2
\end{equation}
with $\displaystyle\lim_{u\to 0} h(u)=0$. Moreover, since $\FF\in \CC^3$, it can be shown that $h\in \CC^1$ and $h'(0)=\FF'''(0)/6$. Thus, applying this expansion to \eqref{eq:defValt}, we get
\begin{equation*}
\VV_\epsilon(z)=\epsilon W\left(\frac{z}{\epsilon}\right)+ \epsilon^2 Q\left(\epsilon;\frac{z}{\epsilon}\right)
\end{equation*}
where we set $W(y)=\alpha \WW(y)$ and
\begin{multline*}
Q(\epsilon ; y)=\alpha \QQ(\epsilon;y)+\frac{\FF''(0)}{2}\WW(y)^2+\epsilon^2\frac{\FF''(0)}{2} \QQ(\epsilon; y)^2+\\
+h\left( \epsilon \WW(y)+ \epsilon^2 \QQ(y) \right)\left[\WW(y)+ \epsilon \QQ(\epsilon;y) \right]^2 
\end{multline*}
All the desired properties of $W$ follow from their analogous ones for $\WW$.  To recover the desired estimates on $Q$ and $Q'$, we notice that, for any arbitrary $C_h>0$, we can find $\epsilon_h$ such that
\begin{equation*}
\begin{array}{l}
\abs{h\left( \epsilon \WW(y)+ \epsilon^2 \QQ(y) \right)}<C_h \\[0.2em]
\abs{h'\left( \epsilon \WW(y)+ \epsilon^2 \QQ(y) \right)}<\abs{\FF'''(0)}+1
\end{array}
  \qquad\text{for every $y\in \R$ and $\epsilon\in (0,\epsilon_h)$}
\end{equation*}
Thus, for every for every positive $\epsilon<\min\{1, \epsilon_\FF,\epsilon_\QQ, \epsilon_h\}$, we have
\begin{equation*}
\abs{Q(\epsilon;y)}\leq C_{Q,0}:=
\alpha \tilde C_{\QQ,0}+\frac{\FF''(0)}{2}\norm{\WW}_\infty^2+\frac{\FF''(0)}{2} \tilde C_{\QQ,0}^2
+C_h\left(\norm{\WW}_\infty+ \tilde C_{\QQ,0} \right)^2
\end{equation*}
The twice continuous differentiability of $Q$ follows from those of $\QQ$ and $\WW$, recalling also that $h(u)u^2$ is twice continuously differentiable  in $u$. Moreover we have the estimate
\begin{multline*}
\abs{Q'(\epsilon;y)}\leq C_{Q,1}:=
\alpha \tilde C_{\QQ,1}+\FF''(0)\norm{\WW}_\infty\norm{\WW'}_\infty+\FF''(0) \tilde C_{\QQ,0}\tilde C_{\QQ,1}+\\
+(\FF'''(0)+1)\left(\norm{\WW}_\infty+ \tilde C_{\QQ,0} \right)^2 +\\
+C_h\left(\norm{\WW}_\infty+ \tilde C_{\QQ,0} \right) \left(\norm{\WW'}_\infty+ \tilde C_{\QQ,1} \right) 
\end{multline*}

\end{proof}

The form \eqref{eq:defValt} of $\VV_\epsilon$ is interesting from a physical point of view, since it highlights the role of two different elements in our applications.
Formula \eqref{eq:fric_decomp} shows that the effective friction $\rho_\pm$ in the $\epsilon\to 0$ limit is the product of two quantities: $\mu_\pm$ associated to $\WW$ and $\alpha$ associated to $\FF$. 
 On one hand the ``geometric'' coefficients $\mu_+, \mu_-$ are related to the (directional) roughness of the surface, as perceived by the geometry of the system. On the other hand, the ``energetic''  coefficient $\alpha$ is associated to a \lq\lq tension\rq\rq\ in the element that mediates the frictional interaction.

This duality is quite central in our applications. Firstly, this distinction reinforces the resemblance with Coulomb's classical formulation of dry friction, where the friction intensity depends both on a coefficient, related to the properties of the interacting surfaces, and on the normal force exerted by each surface on the other one. Remarkably, in our models, the term $\alpha$ is proportional to the normal force exerted, in the limit case, on the surface by the mediating element.

Moreover, when discussing the \emph{with the nap/ against the nap} asymmetry in Section \ref{sec:contropelo}, we will see that it can be produced by two distinct effects: a geometric effect, given by the intrinsic asymmetry of the system, as captured by the coefficients $\mu_\pm$, and a energetic effect, where we observe a change of the configuration of the system between the two phases (with and against the nap), producing a change in the value of $\alpha$.



\section{Modelling}

In this section we discuss three different models to obtain directional dry friction as the limit of the effects of an interaction with a surface having vanishingly small roughness, with the mediation of a hair/bristle-like element. We remind that, as in the previous section, we are assuming quasi-static evolution.

\paragraph{The limit system}
We characterize a frictional interaction governed by dry friction through a system, illustrated in Figure \ref{fig:limit}, consisting of a horizontal spring, that evolves as follows. The position of one end of the spring is controlled by the function $q\in \CC^1([0,T],\R)$; the second end of the spring, with position $u(t)$, is free to move and interacts with the surface, according to the force-velocity law
\begin{equation}
f^\mathrm{lim}(\dert u) =\begin{cases}
-\rho_+<0  &\text{if $\dert u> 0$}\\
\rho\in[-\rho_+,-\rho_-] &\text{if $\dert u= 0$}\\
-\rho_->0  &\text{if $\dert u< 0$}
\end{cases}
\end{equation} 
Thus the limit system has dissipation potential \eqref{eq:defRlim} and internal energy
\begin{equation}
\EE=\frac{k_h}{2}\left(L_h^\mathrm{rest} -q(t) +u \right)^2+\text{const.}
\end{equation}
where $k_h$ and $L_h^\mathrm{rest}$ are respectively  the elastic constant and the rest length of the spring. 

The state of the system will be described by a coordinate $z$ of the form $z(t)=u(t)+c$. The constant $c$, introduced for technical reasons, has different values in the models and can be thought as a gap between the position $u(t)$ of the second end of the spring and the position $z(t)$ at which, in the $\epsilon\to 0$ limit, the bristle-like mediating element interacts with the surface, cf.\ Figure \ref{fig:limit}. Thus the energy $\EE$ can be written in the form \eqref{eq:defE} by setting
\begin{align}
\Phi(z)=\frac{k_h}{2} z^2  && \ell(t)=k_h(q(t)-L_h^\mathrm{rest})
\end{align} 
and neglecting a remaining term $r(t)$, depending only on the time $t$, since it does not affect the dynamics \eqref{eq:din_lim}. We also remark that the change of variable to $z$ does not alter the dissipative terms, since $\dert u=\dert z$.

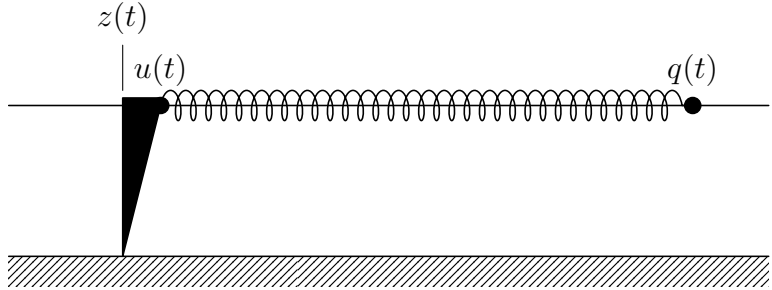
\begin{figure}\centering
\begin{tikzpicture}[line cap=round,line join=round,>=triangle 45,x=1.0cm,y=1.0cm, line width=0.6pt]
\clip(-1.,1.) rectangle (11.,6.);
\draw (10.,2.)-- (0.,2.);
\draw (0.,4.)-- (10.,4.);
\draw[thin] (1.5,4.2) --(1.5,4.8) node[anchor=south] {$z(t)$};
\draw[decoration={aspect=0.4, segment length=2mm, amplitude=2mm,coil},decorate] (2.,4.)-- (9.,4.) ;
\draw [fill=black] (2.,4.)--(2.,4.1)--(1.5,4.1)--   (1.5,4.)--(1.5,2)--(2.,4.);
\draw [fill=black] (2.,4.) circle (3pt) node[above=0.11cm]{$u(t)$};
\draw [fill=black] (9.,4.) circle (3pt) node[above=0.11cm]{$q(t)$};
\fill [pattern = north east lines] (0.,2) rectangle (10.,1.6);
\end{tikzpicture}
\caption{The limit system.}
\label{fig:limit}
\end{figure}

\paragraph{The approximating systems}

In the approximating systems, we imagine that the surface in no longer flat, but has a small, $\epsilon$-periodic perturbation of the form
\begin{equation}
w_\epsilon(x)=\epsilon w\left(\frac{x}{\epsilon}\right)
\end{equation}
where $w\in \CC^2(\R,\R)$ is a $1$-periodic (non-constant) function. Moreover we define
\begin{align}
\omega_+&=\max w'(x)>0  &
\omega_-&=\min  w'(x)<0 
\end{align}

The approximating systems are still characterized by a horizontal spring as in the limit model. However, the interaction with the surface is no longer subject to dry friction, but mediated by a new element, that ideally plays the role of a hair or a bristle, attached to the end $u(t)$ of the horizontal spring.
This element has, up to a constant, an internal energy $V_\epsilon$ as in \eqref{eq:defValt}, that depends only on $u$ and on the magnitude of the perturbation $\epsilon$. Finally, the only dissipative force acting on the system is a (vanishing) viscous force
\begin{equation}
f_\epsilon^\mathrm{vis}(\dert u)=-\epsilon^\gamma \dert u
\end{equation}
so that the Rayleigh dissipation potential of the system is given by \eqref{eq:defReps}.

\medbreak

In the following we discuss three different models for this mediating element.
In the first model, the mediating element is a vertical spring. The second model is actually a generalization of the first one, since in this case the spring forms a constant angle $\theta$ with the vertical axis. In the third model the mediating element is a straight rigid bar with constant length, but now the angle with the vertical axis can change and is influenced by an angular spring.


\subsection{First model: vertical spring}

\begin{figure}
\centering
\begin{tikzpicture}[line cap=round,line join=round,>=triangle 45,x=1.0cm,y=1.0cm, line width=0.6pt]
\clip(-1.,-1.) rectangle (11.,5.);
\draw (10.,0.)-- (0.,0.);
\draw (0.,4.)-- (10.,4.);
\draw[decoration={aspect=0.4, segment length=2mm, amplitude=2mm,coil},decorate] (4.,4.)-- (8.,4.) ;
\draw[<->] (0.5,4.)-- (0.5,0.);
\draw[line width=1.2pt,color=red,smooth,samples=100,domain=-0.5:10.5] plot(\x,{0.5*sin(((\x))*180/pi)}) node[anchor=south]{$w_\epsilon$} ;
\draw[decoration={aspect=0.5, segment length=2mm, amplitude=1.5mm,coil},decorate] (4.,4.)-- (4.,-0.378);
\draw [fill=black] (4.,4.) circle (3pt) node[above=0.11cm]{$u(t)$};
\draw [fill=black] (8.,4.) circle (3pt) node[above=0.11cm]{$q(t)$};
\draw [fill=black] (4.,-0.378) circle (3pt) ;
\draw(0.5,2) node[anchor=east] {$h$};
\end{tikzpicture}
\caption{First model: vertical spring}
\label{fig:model1}
\end{figure}
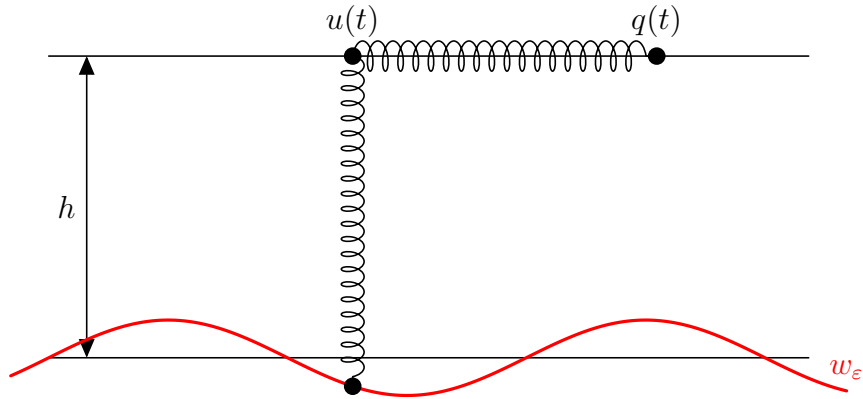

In our first model the mediating element (bristle) is a vertical spring, with horizontal position $u(t)$, as illustrated in Figure \ref{fig:model1}. One end of the spring has fixed height, while the height of the other end follows the fluctuation of the surface, in such a way that the length of the spring is 
\begin{equation*}
L(u)=h-w_\epsilon(u)
\end{equation*}
 Let $k>0$ and $L^\mathrm{rest}$ be respectively  the elastic constant and the rest length of the spring. Setting $z(t)=u(t)$, the energy of the vertical spring is
 \begin{equation*}
\frac{k}{2}\left(L^\mathrm{rest} -h+w_\epsilon(z)  \right)^2=\FF(\epsilon\WW(\frac{z}{\epsilon}))=V_\epsilon(z)+\FF(0)
\end{equation*}
where $\WW(y)=w(y)$ and $\FF(y)=\frac{k}{2}\left(L^\mathrm{rest} -h+y  \right)^2$, so that we have
\begin{align*}
\alpha&=\FF'(0)=k(L^\mathrm{rest}-h)  
& \mu_+&=\omega_+   & \mu_-&=\omega_-
\end{align*}
We require
\begin{equation*}
L^\mathrm{rest}\neq h
\end{equation*}
so that $\alpha\neq 0$ and \eqref{eq:cond_alpha} is satisfied. We notice that, for instance, setting $L^\mathrm{rest}> h$ means that the spring is always compressed.

In this way all the requirements of Lemma \ref{lemma:form} are satisfied, and therefore we can apply Theorem \ref{th:main} to obtain the desired behaviour for the limit system. In this way, for a compressed spring, we recover a sort of Coulomb law, since the friction coefficients are proportional to the normal force exerted by the spring on the surface, that, in the limit, is exactly equal to $\alpha$. Moreover, if the profile of the fluctuations is asymmetric, in the sense that $\omega_+\neq \omega_-$, then also the friction is asymmetric.



\subsection{Second model: slanted spring}

\begin{figure}
\centering
\begin{tikzpicture}[line cap=round,line join=round,>=triangle 45,x=1.0cm,y=1.0cm, line width=0.6pt]
\clip(-1.,-1.) rectangle (11.,5.);
\draw (10.,0.)-- (0.,0.);
\draw (0.,4.)-- (10.,4.);
\draw (4.,4.)-- (4.,2.5);
\draw[decoration={aspect=0.4, segment length=2mm, amplitude=2mm,coil},decorate] (4.,4.)-- (8.,4.) ;
\draw[<->] (0.5,4.)-- (0.5,0.);
\draw[line width=1.2pt,color=red,smooth,samples=100,domain=-0.5:10.5] plot(\x,{0.5*sin(((\x))*180/pi)}) node[anchor=south]{$w_\epsilon$} ;
\draw[decoration={aspect=0.5, segment length=2mm, amplitude=1.5mm,coil},decorate] (4.,4.)-- (2.,0.45);
\draw (2.,0.45)-- (2.,0.) node[anchor=north]{$p(t)$};
\draw [shift={(4.,4.)},fill=blue,fill opacity=0.2]  (0,0) --  plot[domain=4.198:4.712,variable=\t]({1.*1.117*cos(\t r)+0.*1.117*sin(\t r)},{0.*1.117*cos(\t r)+1.*1.117*sin(\t r)}) -- cycle ;
\draw[color=blue] (3.7,2.5) node {$\theta$};
\draw [fill=black] (4.,4.) circle (3pt) node[above=0.11cm]{$u(t)$};
\draw [fill=black] (8.,4.) circle (3pt) node[above=0.11cm]{$q(t)$};
\draw [fill=black] (2.,0.45) circle (3pt) ;
\draw(0.5,2) node[anchor=east] {$h$};
\end{tikzpicture}
\caption{Second model: slanted spring}
\label{fig:model2}
\end{figure}
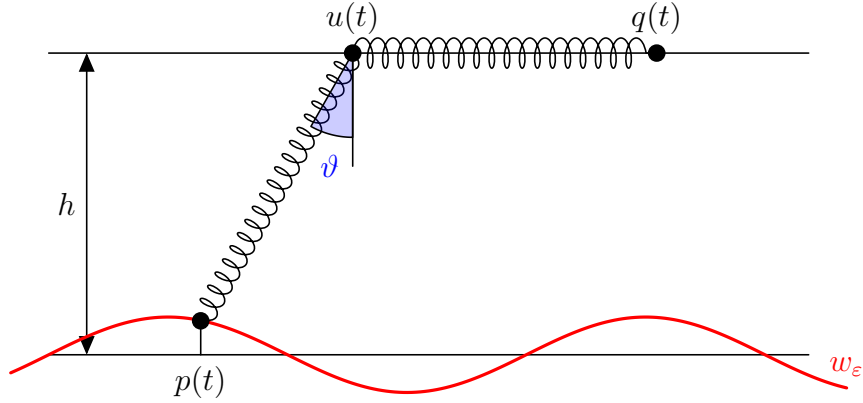

Our second model generalizes the first one, since in this case we consider a slanted spring forming a fixed angle $0<\theta<\pi/2$ with the vertical axis, as illustrated in Figure \ref{fig:model2}. As before, one end of the spring has fixed height and horizontal position $u(t)$. In this case, however, the horizontal position of the second end will be different from $u(t)$ and denoted with $p(t)$. We therefore have
\begin{equation}
\frac{u-p}{h-w_\epsilon(p)}=\tan \theta
\end{equation}
We can express explicitly $u$ as a function of  $p$ as
\begin{equation} \label{eq:IIz(p)}
u - h\tan \theta=p-w_\epsilon(p)\tan \theta 
\end{equation}
We require 
\begin{equation} \label{eq:m2_rough}
\omega_+<\cot \theta
\end{equation}
so that  $w_\epsilon '(p)\tan \theta<\omega_+\tan \theta<1$ and therefore $p_\epsilon(u)$ is a one-to-one correspondence. The length of the spring is thus
\begin{equation}
L=\sqrt{(u-p)^2+(h-w_\epsilon(p))^2}=\frac{u-p}{\sin \theta}=\frac{h-w_\epsilon(p)}{\cos \theta}
\end{equation}
For our purposes, it is convenient to adopt the variable $$z=u-h\tan \theta$$ to represent the state of the system. Setting
\begin{equation*}
g(p)=p-w(p)\tan \theta
\end{equation*}
we notice, for every choice of $\epsilon>0$,  the function $g$ relates $z(t)$ with $p(t)$ through the one-to-one correspondences 
\begin{equation*}
\frac{z(t)}{\epsilon}= g\left(\frac{p(t)}{\epsilon}\right)
\end{equation*}
The bijectivity of $g$ follows from \eqref{eq:m2_rough} since
\begin{equation*}
g'(p)=1-w'(p)\tan \theta>0
\end{equation*}
The inverse function $g^{-1}$ is twice continuously differentiable and such that $g^{-1}(z+1)=g^{-1}(z)+1$ for every $z\in\R$.

\medbreak

We set
\begin{align*}
\WW(z)=w\left(g^{-1}(z)\right)  && \QQ(\epsilon;z)=0
\end{align*}
and
\begin{equation*}
\FF(y)=\frac{k}{2}\left(L^\mathrm{rest} -\frac{h-y}{\cos \theta} \right)^2
\end{equation*}
so that, up to a constant, the internal energy of the slanted spring is given by $\VV_{\epsilon}(z)=\FF(\epsilon\WW(z/\epsilon))$ of the form \eqref{eq:defValt}.

\medbreak

We now want to determine the coefficients $\rho_\pm$. Since
\begin{equation} \label{eq:m2_alpha}
\alpha= \frac{k}{\cos \theta}\left( L^\mathrm{rest}-\frac{h}{\cos \theta}\right)
\end{equation}
it remains to find $\mu_+$ and $\mu_-$. Since this involves the derivative of $g^{-1}$, difficulties may arise trying a direct computation,   since $g$ cannot  be always inverted explicitly and thus, in general, $\WW$ may not be explicitly determined.
Such is the case, for instance, of a sinusoidal choice of $w$, for which the inversion of $g$ leads to the well studied problem of the inverse Kepler equation~\cite{Arn07}. 

However, for our purpose, the full knowledge of the fluctuation profile  as perceived by the slanted spring, i.e.~the explicit form of $\WW$, is not necessary, since we are only interested in the minimum and maximum of $\WW'$. Such values can be computed without inverting $g$ explicitly.
Since the same issue will arise also in the next model, we summarize the result in the following lemma.

\begin{lemma} \label{lem:mu}
Let $w\in\CC^2(\R,\R)$, be a $1$-periodic function with $\omega_+=\max w'(z)>0$  and $\omega_-=\min  w'(z)<0$ . For some constant $a$, with $\omega_-^{-1}<-a<\omega_+^{-1}$, we consider
\begin{align*}
g(p)=p+aw(p)>0   &&  \WW(z)=w\left(g^{-1}(z)\right)
\end{align*}
Then
\begin{align}
\mu_+&=\max \WW'(z)=\frac{\omega_+}{1+a\omega_+}  &
\mu_-&=\min  \WW'(z)=\frac{\omega_-}{1+a\omega_-}
\end{align}
\end{lemma}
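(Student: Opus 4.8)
The plan is to reduce the statement to the monotonicity of a single scalar auxiliary function, so that no explicit knowledge of $g^{-1}$ or $\WW$ is needed. First I would check that the hypothesis $\omega_-^{-1} < -a < \omega_+^{-1}$ is exactly equivalent to $1 + a\,s > 0$ for every $s \in [\omega_-,\omega_+]$: since $s \mapsto 1 + a\,s$ is affine, positivity on the interval is equivalent to positivity at its two endpoints, and dividing $1 + a\omega_+ > 0$ by $\omega_+ > 0$ and $1 + a\omega_- > 0$ by $\omega_- < 0$ recovers precisely $-a < \omega_+^{-1}$ and $-a > \omega_-^{-1}$. In particular $g'(p) = 1 + a\,w'(p) > 0$ for all $p$; combined with the equivariance $g(p+1) = g(p)+1$ coming from the $1$-periodicity of $w$, this shows that $g$ is a strictly increasing $\CC^2$ bijection of $\R$ onto $\R$, so that $g^{-1} \in \CC^2$ with $(g^{-1})'(z) = 1/g'(g^{-1}(z))$ and $\WW = w\circ g^{-1} \in \CC^1$.

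Next I would differentiate: by the chain rule, writing $p = g^{-1}(z)$,
\begin{equation*}
\WW'(z) = \frac{w'(g^{-1}(z))}{g'(g^{-1}(z))} = \frac{w'(p)}{1 + a\,w'(p)}.
\end{equation*}
Because $g^{-1}$ is a bijection of $\R$, as $z$ ranges over $\R$ the argument $p$ ranges over all of $\R$, so the range of $\WW'$ coincides with the range of the map $p \mapsto w'(p)/(1+a\,w'(p))$. I would then factor this map as $\phi\circ w'$ with $\phi(s) = s/(1+as)$, observe that the continuous $1$-periodic function $w'$ attains exactly the values of the compact interval $[\omega_-,\omega_+]$ (endpoints included), and that on $[\omega_-,\omega_+]$ the function $\phi$ is well defined and strictly increasing since $\phi'(s) = (1+as)^{-2} > 0$. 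Composing a surjection onto $[\omega_-,\omega_+]$ with a strictly increasing function on that interval, one gets $\max\WW' = \phi(\omega_+) = \omega_+/(1+a\omega_+)$ and $\min\WW' = \phi(\omega_-) = \omega_-/(1+a\omega_-)$; the signs $\mu_+ > 0 > \mu_-$ follow from $1 + a\omega_\pm > 0$, and these are the claimed formulas.

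I do not expect a genuinely hard step here: the proof is a chain-rule computation followed by the monotonicity of $\phi$. The only parts worth a careful line are the equivalence between the admissible range of $a$ and the positivity of $1 + a\,w'$ throughout $[\omega_-,\omega_+]$, and the remark that $g$ is a bijection of the entire real line rather than merely a local diffeomorphism --- this is what guarantees that $\WW'$ and $w'/(1+aw')$ share the same range, and it relies on the periodicity of $w$ through the relation $g(p+1)=g(p)+1$.
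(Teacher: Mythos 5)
Your proposal is correct and follows essentially the same route as the paper's proof: the chain-rule identity $\WW'(z)=w'(p)/(1+a\,w'(p))$ with $p=g^{-1}(z)$, combined with the bijectivity of $g$ and the strict monotonicity of $s\mapsto s/(1+as)$ on $[\omega_-,\omega_+]$. Your additional remarks (the equivalence of the bounds on $a$ with $1+a\,s>0$ on $[\omega_-,\omega_+]$, and the global bijectivity of $g$ via $g(p+1)=g(p)+1$) merely make explicit details the paper leaves implicit.
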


\begin{proof}
For any fixed $\bar z\in \R$, let us define $\bar p=g^{-1}(\bar z)$. We have
\begin{equation}
\WW'(\bar z)=w '(\bar p)\cdot (g^{-1})'(\bar z)=w '(\bar p)\frac{1}{g'(\bar p)}=\frac{w'(\bar p)}{1+ a w'(\bar p)}
\end{equation}
Since $g$ is a bijection and the function $\displaystyle y\mapsto \frac{y}{1+ay}$ is increasing monotone for $y\in[\omega_-,\omega_+]$, we get
\begin{align*}
\mu_+&=\max_{\bar z\in \R} W'(\bar z)=\max_{\bar p\in\R} \frac{w'( \bar p)}{1+ a w'(\bar p)}=\frac{\omega_+}{1+a\omega_+} \\[0.2em]
\mu_-&=\min_{\bar z\in \R}  W'(\bar z)=\min_{\bar p\in\R} \frac{w'( \bar p)}{1+ a w'(\bar p)}=\frac{\omega_-}{1+a\omega_-}
\end{align*}
\end{proof}

Thus, for our second model, we have
\begin{align}
\mu_+&=\frac{\omega_+}{1-\omega_+\tan \theta}  &
\mu_-&=\frac{\omega_-}{1-\omega_-\tan \theta }
\end{align}
We notice that,  for $\theta=0$, we recover the situation of the first model, as expected. The behaviour of the coefficient as function of $\theta$ is illustrated in Figure \ref{fig:model2_mu}.

\begin{figure}
\centering
\includegraphics[]{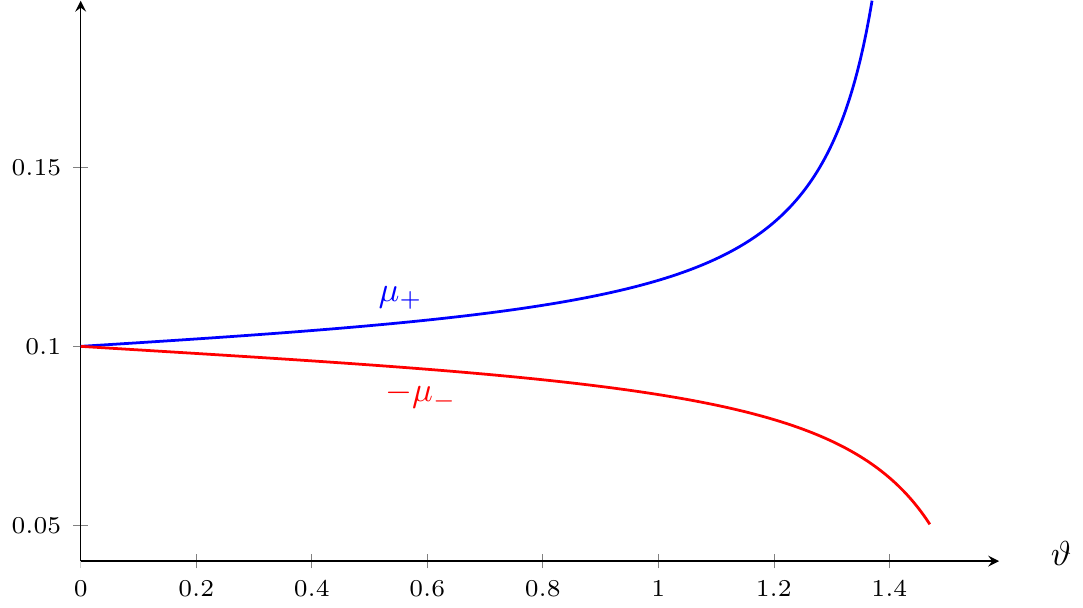}
    \caption{Behaviour of $\mu_+$ and $\mu_-$ in the second model as a function of $\theta$. We are setting $\omega_+=-\omega_-=0.1$, so that by \eqref{eq:m2_rough} the admissible domain is $0< \theta< \arccot  0.1$.}
\label{fig:model2_mu}
\end{figure}

\medbreak

Thus all the requirements of Lemma \ref{lemma:form} are satisfied, and  Theorem \ref{th:main} can be applied. We also observe from \eqref{eq:m2_alpha} that the coefficient $\alpha$ is proportional to the normal force exerted by the spring on flat surface at $\epsilon=0$, by a factor $1/\cos^2\theta$.

\medbreak

We notice that, for this model, we have $\rho_+>-\rho_-$, meaning that the friction opposing a rightward movement ($\dert u>0$) is greater than the one corresponding to a leftward movement ($\dert u<0$). This is exactly the opposite of what we usually experience in the \emph{with the nap/against the nap} asymmetry, for which, as we will discuss in Section \ref{sec:contropelo}, other explanations can be found.

Remarkably, such a \lq\lq reversed\rq\rq\ with the nap/against the nap asymmetry has been observed in experiments dealing with friction force microscopy on molecular monolayers \cite{KwaShi04,Lil98}; the resemblance with such situations suggests a possible connection.




\subsection{Third model: angular spring} \label{sec:3m}

\begin{figure}
\centering
\begin{tikzpicture}[line cap=round,line join=round,>=triangle 45,x=1.0cm,y=1.0cm, line width=0.6pt]
\clip(-1.,-1.) rectangle (11.,5.);
\draw (10.,0.)-- (0.,0.);
\draw (0.,4.)-- (10.,4.);
\draw (5.,4.)-- (5.,2.5);
\draw[decoration={aspect=0.4, segment length=2mm, amplitude=2mm,coil},decorate] (5.,4.)-- (9.,4.) ;
\draw[decoration={aspect=0.4, segment length=1mm, amplitude=1mm,coil},decorate] (5.,3.016797393581479)--(4.364897125418177,3.2494457354953767);
\draw[<->] (0.5,4.)-- (0.5,0.);
\draw[line width=1.2pt,color=red,smooth,samples=100,domain=-0.5:10.5] plot(\x,{0.5*sin(((\x))*180/pi)}) node[anchor=south]{$w_\epsilon$} ;
\draw[line width= 3pt] (5.,4.)-- (2.,0.45);
\draw (2.,0.45)-- (2.,0.) node[anchor=north]{$p(t)$};
\draw [shift={(5.,4.)},fill=blue,fill opacity=0.2]  (0,0) --  plot[domain=4.01:4.712,variable=\t]({1.*1.274*cos(\t r)+0.*1.2744*sin(\t r)},{0.*1.274*cos(\t r)+1.*1.2744373933593*sin(\t r)}) -- cycle ;
\draw[color=blue] (4.5,2.2) node {$\theta(t)$};
\draw [fill=black] (5.,4.) circle (3pt) node[above=0.11cm]{$u(t)$};
\draw [fill=black] (9.,4.) circle (3pt) node[above=0.11cm]{$q(t)$};
\draw(0.5,2) node[anchor=east] {$h$};
\end{tikzpicture}
\caption{Third model: angular spring, for $\theta^\mathrm{rest}=0$}
\label{fig:model3}
\end{figure}

In this model, the mediating element consists of a straight rigid rod with length $L$, as illustrated in Figure \ref{fig:model3}.
One end of the rod has constant height and horizontal position $u(t)$. The rod can rotate around this end and we denote with $\theta>0$ the angle formed with the vertical axis. We denote with $p(t)$ the horizontal coordinate of the second end of the rod and assume that the systems is oriented so that $p<u$.
Denoting with $h$ the distance between  the first end of the rod and the limit flat surface, we require $L>h$, so that, for sufficiently small oscillations $w_\epsilon$, the rod can always touch the surface. We define
\begin{equation}
\theta^\mathrm{lim}=\arccos \frac{h}{L}>0
\end{equation}
as the angle of the rod when it touches the flat surfaces in the limit $\epsilon\to 0$. 
The rod has an angular spring with rest angle $\theta^\mathrm{rest}$. We assume
\begin{equation} \label{eq:m3_angle_ass}
\theta^\mathrm{lim} > \theta^\mathrm{rest} >-\frac{\pi}{2}
\end{equation}

The internal energy of the spring is
\begin{equation*}
\frac{k}{2}\left(\theta - \theta^\mathrm{rest} \right)^2
\end{equation*}
Since the surface acts as a constraint on the system and we consider quasi-static motion, for each value of $u$, we deduce that the rod assumes the minimum angle possible $\theta=\theta (u)$, touching the surface.

We require, for every $x\in \R$,
\begin{equation} \label{eq:m3_cont_ass}
-\tan \theta^\mathrm{lim} < w'(x) < \cot \theta^\mathrm{lim}
\end{equation}
For $\epsilon$ sufficiently small, the right inequality assures that the rod touches the wiggly surface only with its second end, whereas the left inequality implies that, when $u(t)$ changes, then also $p(t)$ changes, but without jumps.
Thus, since the second end of the rod touches the surface, we can deduce the following relationships:
\begin{equation} \label{eq:m3_coord}
L=\sqrt{(u-p)^2+(h-w_\epsilon(p))^2}=\frac{u-p}{\sin \theta}=\frac{h-w_\epsilon(p)}{\cos \theta}
\end{equation}
From this, can express explicitly $\theta(t)$ as a function of $p(t)$, namely
\begin{equation}
\theta(t)=\arccos \frac{h-w_\epsilon(p(t))}{L}
\end{equation}

\medbreak

Let us introduce the new variable
\begin{equation}
z(t)=u(t)-\sqrt{L^2-h^2}
\end{equation} 
We now want to show that $w_\epsilon(p(t))$ can be expressed as a function of $z(t)$ of the form
\begin{equation*}
w_\epsilon(p(t))=\epsilon \WW\left(\frac{z(t)}{\epsilon}\right)+ \epsilon^2 \QQ\left(\epsilon;\frac{z(t)}{\epsilon}\right)
\end{equation*}
with $\WW$ and $\QQ$ as in \eqref{eq:defValt}; in this way also $\theta(t)$ can be expressed  as a function of $z(t)$.

\medbreak

From \eqref{eq:m3_coord} we can express $z(t)$ as a function of $p(t)$, as
\begin{equation} \label{eq:m3_zp}
z(t)=p(t)+A(w_\epsilon(p(t)))
\end{equation}
where
$$
A(y)=\sqrt{L^2-(h-y)^2}-\sqrt{L^2-h^2}
$$
We notice that $A(0)=0$ and $A'(y)=\frac{h-y}{\sqrt{L^2-(h-y)^2}}$. Equation \eqref{eq:m3_zp} gives a one-to-one correspondence between $z(t)$ and $p(t)$, since
\begin{equation} \label{eq:m3_der}
\Der{z}{p}=1+A'(w_\epsilon(p))w'\left(\frac{p}{\epsilon}\right)=1+(\cot \theta) w'\left(\frac{p}{\epsilon}\right)>0
\end{equation}
for $\epsilon$ sufficiently small. 
The last inequality follows from the fact that, for $\epsilon\to 0$, we have $\norm{w_\epsilon}_\infty\to 0$ and $\theta \approx \theta^{lim}$. Hence,  by \eqref{eq:m3_cont_ass}, we can find $\epsilon_\theta$ such that, for $\epsilon<\epsilon_\theta$, we always have $\cot \theta>0$.

\medbreak

Let us denote $Z=z/\epsilon$ and $P=p/\epsilon$. From \eqref{eq:m3_zp}, we get a twice continuously differentiable bijection $Z=G(\epsilon;P)$,  that can be decomposed as
$$
G(\epsilon;P)=G_0(P)+\epsilon G_R(\epsilon;P)
$$
where
$$
G_0(P)=P+\frac{h}{\sqrt{L^2-h^2}}w(P)=P+(\cot \theta^\mathrm{lim})w(P)
$$
and $G_R(\epsilon;P)$ is $1$-periodic and twice continuously differentiable in $P$; moreover $G_R$ and its derivative in $P$ are uniformly bounded for $\epsilon$ sufficiently small.

From \eqref{eq:m3_der} we know that $D_P \, G(\epsilon; P)>0$ for every $P\in \R$; thus, for each $\epsilon<\epsilon_\theta$, the function $G(\epsilon;\cdot)$ has a twice continuously differentiable  inverse $H(\epsilon; \cdot)$, so that $P=H(\epsilon,Z)$. The function $H$ can be written in the form 
$$
H(\epsilon;Z)=H_0(Z)+\epsilon H_R(\epsilon;Z)
$$
Here $H_0$ is twice continuously differentiable, $1$-periodic in $Z$ and there are two positive constants $C_H$ and $\epsilon_H$ such that
 \begin{equation*}
\begin{array}{l}
\abs{H_R\left( \epsilon; Z \right)}<C_H \\[0.2em]
\abs{D_Z H_R\left( \epsilon; Z \right)}<C_H
\end{array}
  \qquad\text{for every $Z\in \R$ and every $\epsilon\in (0,\epsilon_H)$}
\end{equation*} 
A straightforward computation shows that $H_0=G_0^{-1}$.

\medbreak

Let us notice that, since that, since $w$ is periodic and twice continuously differentiable, there exists a continuously differentiable function $h_w\colon \R\times \R \to R$,  $1$-periodic and such that
$$
w(x+\epsilon)=w(x)+\epsilon h_w(\epsilon; x)
$$
Moreover there exist two positive constants $C_w$ and $\epsilon_w$ such that 
 \begin{equation*}
\begin{array}{l}
\abs{h_w\left( \epsilon; x \right)}<C_w \\[0.2em]
\abs{D_x h_w\left( \epsilon; x \right)}<C_w
\end{array}
  \qquad\text{for every $x\in \R$ and every $\epsilon\in (0,\epsilon_w)$}
\end{equation*} 
Thus we have
\begin{align*}
w (P)&= w\left(H_0(Z)+\epsilon H_R(\epsilon;Z)\right)\\
&=w(H_0(Z))+\epsilon h_w(\epsilon H_R(\epsilon;Z);H_0(Z))H_R(\epsilon;Z)
\end{align*}

\medbreak

We set
\begin{align*}
&\WW(y)=w(H_0(y))\\[0.1em]
&\QQ(\epsilon; y)=h_w(\epsilon H_R(\epsilon;y);H_0(y))H_R(\epsilon;y)\\[0.1em]
&\FF(y)=\frac{k}{2}\left(\arccos \frac{h-y}{L} - \theta^\mathrm{rest} \right)^2
\end{align*}
and observe the energy of the angular spring is, up to a constant, expressed by a function $\VV_\epsilon(z)$ of the form \eqref{eq:defValt}, with constants $\tilde C_{\QQ,0}=C_wC_H$, $\tilde C_{\QQ,1}=C_wC_H(\norm{H_0'}_\infty
+1)$ and $\epsilon_\QQ=\min\{\epsilon_\theta, \epsilon_H,\epsilon_w\}$.

We obtain that
\begin{equation} \label{eq:m3_alpha}
\alpha=\FF'(0)=\frac{k}{\sqrt{L^2-h^2}}(\theta^\mathrm{lim} - \theta^\mathrm{rest} )
\end{equation}
so that, by \eqref{eq:m3_angle_ass}, the assumption \eqref{eq:cond_alpha} is satisfied, as are also the other requirements of Lemma \ref{lemma:form}. Thus Theorem \ref{th:main} gives the desired behaviour for $\epsilon\to 0$.

As in the previous model, in general $G$ cannot be inverted explicitly. However we can apply Lemma \ref{lem:mu} to recover the coefficients $\mu_+,\mu_-$. We have
\begin{align} \label{eq:m3_mu}
\mu_+&=\frac{\omega_+}{1+\omega_+\cot \theta^\mathrm{lim}}  &
\mu_-&=\frac{\omega_-}{1+ \omega_-\cot \theta^\mathrm{lim}}
\end{align}
where we recall that $\cot \theta^\mathrm{lim}=\frac{h}{\sqrt{L^2-h^2}}$. The behaviour of the coefficient as a function of $\theta^\mathrm{lim}$ is illustrated in Figure \ref{fig:model3_mu}.

\begin{figure}
\centering

\includegraphics[]{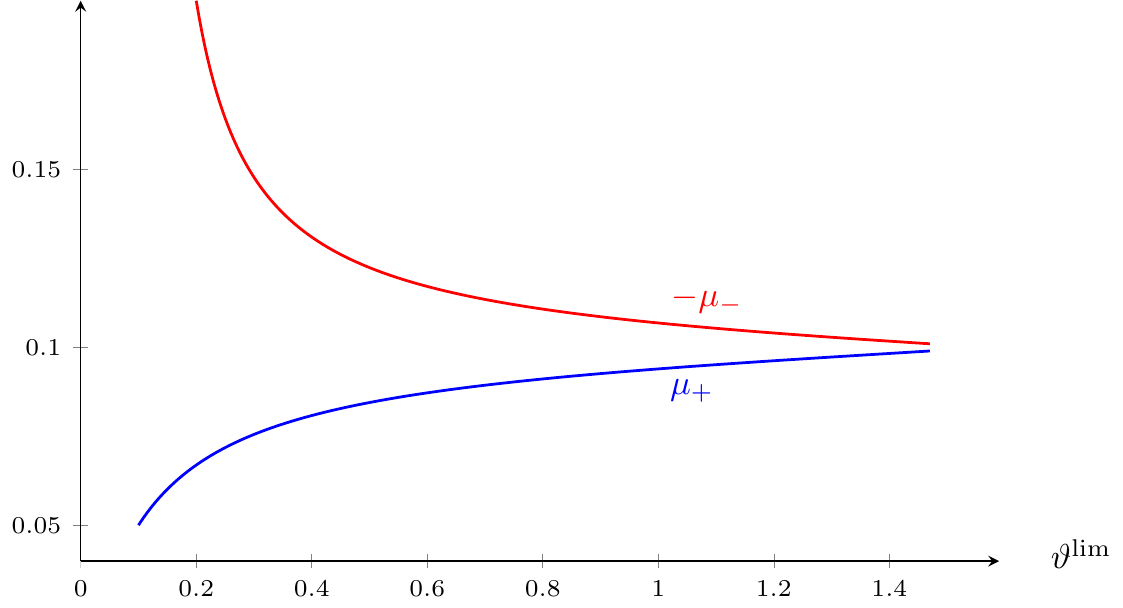}

\caption{Behaviour of $\mu_+$ and $\mu_-$ in the third model as a function of $\theta^\mathrm{lim}$. We are setting $\omega_+=-\omega_-=0.1$, so that by \eqref{eq:m3_cont_ass} the admissible domain is $\arctan 0.1 < \theta^\mathrm{lim} < \arccot  0.1$.}
\label{fig:model3_mu}
\end{figure}



\subsection{Interpretation of the \emph{with the nap/against the nap} effect.} \label{sec:contropelo}

A hairy surface is a common denominator of many situations where we  experience a directionality in the friction:  stroking a cat, rubbing a brush with slanted bristles, using climbing skins for backcountry skiing or brushing napped fabric. Although we intuitively gather all this instances under the same name of \emph{with the nap/against the nap} asymmetry,
what we are actually considering is family of different phenomena, all producing the same kind of directional effect. For instance, in some situations there is no significant change in the bristle configuration between the two phases (e.g.,~rubbing gently a hard brush), while in others large deformations of the bristles occur and we observe a dramatic change in their configuration passing from one direction to the other one (e.g.,~stroking a cat).

 Clearly a comprehensive and complete characterization  of all these with the nap/against the nap phenomena would require a sophisticated modelling of the mechanical behaviour of a bristle. Yet, with the help of the model of Section \ref{sec:3m}, we can easily identify two fundamental effects that are involved.
The geometric one is a direct application of the angular spring model, and holds for sufficiently rigid bristles, remaining straight also under small compressions.
The energetic one instead applies to flexible bristles, buckling very easily when compressed.

\paragraph{Geometric  effect}
From \eqref{eq:m3_mu}, we obtain that, for the angular spring model, we have $\rho_+>-\rho_-$, meaning that the friction opposing a rightward movement ($\dert u>0$) is smaller than the one corresponding to a leftward movement ($\dert u<0$). This is exactly what we expect by the with the nap/against the nap effect.

However, it is not obvious that a bristle should always behave as a rigid bar with an angular spring. Indeed, especially during strokes against the nap, the rod is subject to a longitudinal compression, that could produce buckling in a flexible bar, invalidating the model. An estimate of the axial tension along the bar, obtained by considering the limit case when the lower end of the bar moves on a flat surface experiencing dry friction,  is
\begin{equation}\label{eq:nap_tens}
T=-\frac{k}{L}(\theta^\mathrm{lim}- \theta^\mathrm{rest}) \cot \theta^\mathrm{lim}
+\frac{\rho_\pm}{\sin \theta^\mathrm{lim}}
\end{equation}
where $\rho_\pm$ depends on the direction of motion. We observe that during a stroke against the nap (so $\rho_\pm=\rho_-<0$) the bar is always compressed ($T<0$), however this tension is small when the bristle oscillates near its rest position ($\theta^\mathrm{lim}\approx \theta^\mathrm{rest}$) and the friction coefficients are small. This situation suits well to the motion of a hard brush rubbed gently on a smooth surface.

\paragraph{Energetic  effect} 
When the critical load for buckling is too low, the above description is no longer valid, but we can still apply the model of Section \ref{sec:3m} when the bristle is subject to traction. The following  interpretation of the with the nap/against the nap effect  is based on such assumption.

When moved with the nap, the hair is rotated in the same direction of its rest angle, as shown in Figure \ref{fig:contropelo}(a), so that the angular spring is only slightly stretched. On the other hand, when the hair is moved against the nap, it is rotated in the opposite direction of its rest position, as shown in Figure \ref{fig:contropelo}(b); in this way the angular spring is much more stretched than in the previous case. Another way to describe this scenario is to notice that the tip of the hair is always behind its root, with respect to the direction of motion.

\begin{figure}[t]
\centering
\subfloat[\emph{With the nap}.]{
\begin{tikzpicture}[line cap=round,line join=round,>=triangle 45,x=1.0cm,y=1.0cm, scale=0.6]
\clip(-0.5,-2) rectangle (9.5,7);
\draw (0.,0.)-- (9.,0.);
\draw (0.,4.)-- (9.,4.);
\draw[very thick, <-] (8.,6.)-- (5.,6.) node[anchor=east]{Motion};
\draw[decoration={aspect=0.4, segment length=1.5mm, amplitude=1.5mm,coil},decorate] (5.,4.)-- (8.,4.) ;
\draw [fill=black] (5.,4.) circle (3pt) node[above=0.07cm]{$u(t)$};
\draw [fill=black] (8.,4.) circle (3pt) node[above=0.07cm]{$q(t)$};
\draw[line width= 3pt] (5.,4.)-- (1.,0.);
\draw [shift={(5.,4.)},fill=blue,fill opacity=0.2] (0,0) --  plot[domain=4.346:4.71,variable=\t]({1.*1.5*cos(\t r)+0.*1.5*sin(\t r)},{0.*1.5*cos(\t r)+1.*1.5*sin(\t r)}) -- cycle ;
\draw[dashed, very thick] (5.,4.)-- (2.976,-1.282);
\draw (5.,4.)-- (5.,2.3);
\draw[decoration={aspect=0.4, segment length=0.6mm, amplitude=0.6mm,coil},decorate] (3.94,2.94)-- (4.525,2.577);
\fill [pattern = crosshatch] (0.,0) rectangle (9.,-0.3);
\draw[color=blue] (4.3,1.8) node[anchor=west] {$\theta_\mathrm{with}>0$};
\end{tikzpicture}} \qquad
\subfloat[\emph{Against the nap}.]{
\begin{tikzpicture}[line cap=round,line join=round,>=triangle 45,x=1.0cm,y=1.0cm, scale=0.6]
\clip(-0.5,-2) rectangle (9.5,7);
\draw (0.,0.)-- (9.,0.);
\draw (0.,4.)-- (9.,4.);
\draw[very thick, <-] (8.,6.)-- (5.,6.) node[anchor=east]{Motion};
\draw[decoration={aspect=0.4, segment length=1.5mm, amplitude=1.5mm,coil},decorate] (5.,4.)-- (8.,4.) ;
\draw [fill=black] (5.,4.) circle (3pt) node[above=0.07cm]{$u(t)$};
\draw [fill=black] (8.,4.) circle (3pt) node[above=0.07cm]{$q(t)$};
\draw[line width= 3pt] (5.,4.)-- (1.,0.);
\draw [shift={(5.,4.)},fill=blue,fill opacity=0.2]  (0,0) --  plot[domain=4.712:5.08,variable=\t]({1.*1.5*cos(\t r)+0.*1.5*sin(\t r)},{0.*1.5*cos(\t r)+1.*1.5*sin(\t r)}) -- cycle ;
\draw (5.,4.)-- (5.,2.3);
\draw[dashed, very thick] (5.,4.)-- (7.02,-1.28);
\draw[decoration={aspect=0.4, segment length=1mm, amplitude=0.6mm,coil},decorate] (3.94,2.94)-- (5.474,2.57);
\fill [pattern = crosshatch] (0.,0) rectangle (9.,-0.3);
\draw[color=blue] (4.7,1.8) node[anchor=west] {$\theta_\mathrm{against}<0$};
\end{tikzpicture}}
\caption{Energetic interpretation of the \emph{with the nap/against the nap} asymmetry. The dashed line represent the rest angle of the bar.}
\label{fig:contropelo}
\end{figure}
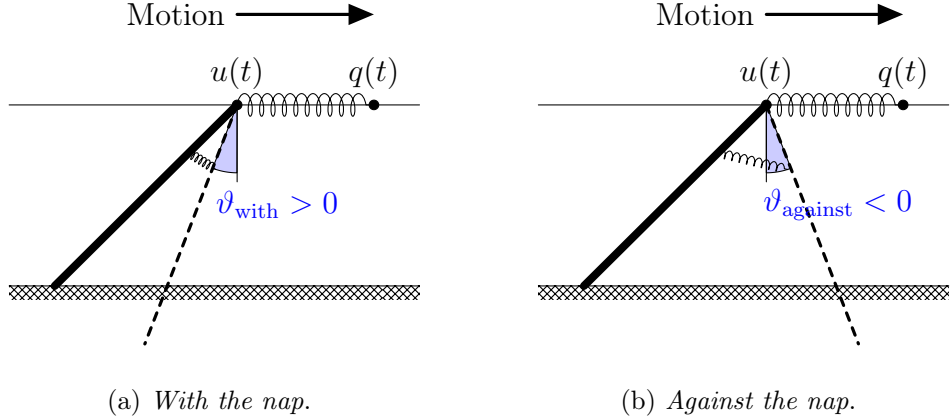

Hence, if we report both situations to the framework of our model of Section \ref{sec:3m} (as done in Figure \ref{fig:contropelo}), we observe that the two cases share the same coefficient $\mu_+$, while we have a change in the coefficient $\alpha$, since the rest angle of the hair changes. In case of \emph{with the nap} motion, the rest angle of the hair is $\theta_\mathrm{with}>0$. On the other hand, the case of \emph{against the nap} motion corresponds to $\theta_\mathrm{against}=-\theta_\mathrm{with}<0$. 

In this way we can immediately recover the friction coefficients using \eqref{eq:m3_alpha} and \eqref{eq:m3_mu}. We get
\begin{align*}
\rho_\mathrm{with}&=\frac{\mu_+}{\tan \theta^\mathrm{lim}} (\theta^\mathrm{lim}-\theta_\mathrm{with})\\
\rho_\mathrm{against}&=\frac{\mu_+}{\tan \theta^\mathrm{lim}} (\theta^\mathrm{lim}-\theta_\mathrm{against})=\frac{\mu_+}{\tan \theta^\mathrm{lim}} (\theta^\mathrm{lim}+\theta_\mathrm{with})
\end{align*}
where we trivially have $\rho_\mathrm{against}>\rho_\mathrm{with}$,
in agreement with our common experience of the phenomenon.

\medbreak

We now analyse the compatibility of this interpretation with the tension of the bristle during the motion. We notice that, since in both phases we have $\rho_\pm=\rho_+$, the last term in \eqref{eq:nap_tens} gives always a positive contribution to the tension. Thus, if the surface is quite rough ($\rho_+$ large) and the bristle flexible ($k/L$ small), the rod is subject to traction, so that our model provides a good approximation. 

We remark that this energetic interpretation requires a transitional phase, where the bristle is strongly deformed, to account for the change of configuration occurring when the direction of motion is inverted.  Since we assume a small critical load and high friction, we expect this transition to be triggered by buckling when the direction is changed, and that, afterwards, a sufficiently long motion in same same direction restores the bristle to a stable straight state, as those we discussed above.

\section{Convergence structure} \label{sec:strat} 
The main issue in Section \ref{sec:abstr} is the change in the nature of the dissipation: in the approximating systems $(\EE_\epsilon,\Rc_\epsilon)$ we have a viscous drag (i.e.~the dissipation potential $\Rc_\epsilon$ is quadratic), whereas in the limit system it is rate independent (i.e.~the dissipation potential $\Rc_\epsilon$ is positively homogeneous of degree $1$).  
Such a situation has been successfully addressed in continuum mechanics, showing that rate-independent plasticity can be obtained as limit of a chain of viscous bistable springs \cite{MieTru12,PugTru05}.
Here we follow the recent approach by Mielke \cite{Mie12} (cf.~also \cite{Mie15}), based on the De Giorgi's $(\Rc,\Rc^*)$ formulation, also called \emph{energy-dissipation principle}. 

\medbreak

We begin by recalling some known facts about the Legendre transform (cf.~for instance \cite{RocWet}). 

\paragraph{Legendre transform and De Giorgi's $(\Rc,\Rc^*)$ formulation} 
Let us consider a function $\Psi\colon \R\to \R\cup \{+\infty\}$ that is proper (i.e.~not identically $+\infty$), lower semi-continuous and convex .  The Legendre transform $\Psi^*\colon \R \to \R\cup \{+\infty\}$ of $\Psi$ is defined as
\begin{equation*}
\Psi^*(\xi)=\sup_{x\in X} \left[{\xi}{x}-\Psi(x)\right]
\end{equation*}
The function $\Psi^*$  is proper, lower semi-continuous and convex; moreover we have  $(\Psi^*)^*=\Psi$.

We now briefly recall some well-known properties of the Legendre transform. 
The Fenchel estimate states that, for every $x\in \R$ and $\xi \in X\R$,  we have
\begin{equation}
\Psi(x)+\Psi^*(\xi)\geq {\xi}{x}
\label{eq:FenEst}
\end{equation}
The case when the equality holds is characterized by the Legendre-Fenchel equivalence:
\begin{align}
\xi\in \partial \Psi(x) 
&& \Longleftrightarrow  &&
x\in \partial \Psi^*(\xi) 
&&\Longleftrightarrow  &&
\Psi(x)+\Psi^*(\xi)= {\xi}{x}
\label{eq:LegFenEq}
\end{align}

Let us now consider the problem
\begin{equation}
0\in \partial_{\dert z} \tilde \Rc({\dert z})+D_z\tilde \EE(t,z)
\label{eq:din_tilde}
\end{equation}
where $\tilde \EE\in \CC^1([0,T]\times \R,\R)$ and $\tilde{\Rc}\colon \R\to \R$ is a convex function. A solution of the problem is a function $z\colon [0,T]\to \R$ that satisfies \eqref{eq:din_tilde} for almost every $t\in [0,T]$. Note that this framework covers both the wiggly systems \eqref{eq:din_eps} and the limit system \eqref{eq:din_lim}.

Let us therefore define $\tilde \Rc^*\colon \R \to \R\cup \{+\infty\}$ as the Legendre transform of the function $\tilde\Rc$.  
First of all, let us notice that, by the Legendre-Fenchel equivalence \eqref{eq:LegFenEq}, the inclusion \eqref{eq:din_tilde} is equivalent to
\begin{equation}
\left(\dert z(t), -D_z\tilde \EE(t,z(t)) \right) \in \tilde \CC_{\Psi+\Psi^*} =\left\{(x,\xi) \colon \Psi(x)+\Psi^*(\xi)= {\xi}{x} \right\}
\label{eq:din_tilde_tras}
\end{equation}

De Giorgi's  $(\Rc,\Rc^*)$ formulation of the problem consists in the following sufficient condition for being a solution of \eqref{eq:din_tilde}.

\begin{prop} \label{prop:DGform}
A function $z\colon [0,T]\to \R$ is a solution of \eqref{eq:din_tilde} if and only if it satisfies
\begin{equation}
\begin{split}
\tilde \EE(T,z(T))+\int_0^T{\left[\tilde \Rc(\dert z(s))+\tilde \Rc^*\left(-D_z\tilde \EE(s,z(s))\right)\right]\dd s} \leq\\
\leq \tilde \EE (0,z(0)) +\int_0^T{\partial_t\tilde\EE(s,z(s))}\dd s
\end{split}
\label{eq:DGform}
\end{equation}
\end{prop}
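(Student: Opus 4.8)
The plan is to run the classical De Giorgi $(\Rc,\Rc^*)$ argument, which uses only two ingredients: the chain rule for the map $t\mapsto\tilde\EE(t,z(t))$, and the Fenchel estimate \eqref{eq:FenEst} together with the characterisation \eqref{eq:LegFenEq} of its equality case. I would present the proof as a single chain of equivalences, so that both directions of the statement come out simultaneously.

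First I would record that an (absolutely continuous) solution $z$ makes $t\mapsto\tilde\EE(t,z(t))$ absolutely continuous, since $\tilde\EE\in\CC^1$, and that the chain rule then gives, for almost every $s\in[0,T]$,
\begin{equation*}
\frac{\dd}{\dd s}\,\tilde\EE(s,z(s))=\partial_t\tilde\EE(s,z(s))+D_z\tilde\EE(s,z(s))\,\dert z(s) .
\end{equation*}
Integrating this over $[0,T]$ and substituting the resulting expression for $\tilde\EE(T,z(T))-\tilde\EE(0,z(0))$ into \eqref{eq:DGform}, the terms $\tilde\EE(0,z(0))$ and $\int_0^T\partial_t\tilde\EE(s,z(s))\,\dd s$ cancel, so that \eqref{eq:DGform} reduces to the equivalent inequality
\begin{equation*}
\int_0^T\Bigl[\bigl(-D_z\tilde\EE(s,z(s))\bigr)\,\dert z(s)-\tilde\Rc(\dert z(s))-\tilde\Rc^*\bigl(-D_z\tilde\EE(s,z(s))\bigr)\Bigr]\dd s\ \geq\ 0 .
\end{equation*}

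Then I would invoke the Fenchel estimate \eqref{eq:FenEst}, applied with $x=\dert z(s)$ and $\xi=-D_z\tilde\EE(s,z(s))$: it shows that the integrand above is $\leq 0$ for a.e.\ $s$. Hence that integral is always $\leq 0$, so the inequality just displayed --- equivalently, \eqref{eq:DGform} --- holds if and only if the integrand vanishes for a.e.\ $s$, i.e.\ if and only if
\begin{equation*}
\tilde\Rc(\dert z(s))+\tilde\Rc^*\bigl(-D_z\tilde\EE(s,z(s))\bigr)=\bigl(-D_z\tilde\EE(s,z(s))\bigr)\,\dert z(s)\qquad\text{for a.e.\ }s .
\end{equation*}
By the Legendre--Fenchel equivalence \eqref{eq:LegFenEq} (applied to $\Psi=\tilde\Rc$) this is precisely the condition $-D_z\tilde\EE(s,z(s))\in\partial_{\dert z}\tilde\Rc(\dert z(s))$ for a.e.\ $s$, that is, \eqref{eq:din_tilde}. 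This closes the chain of equivalences and proves the proposition; note that it also shows that a solution satisfies \eqref{eq:DGform} with the two sides in fact equal.

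The only step that requires some care is the chain rule, which presupposes the absolute continuity of $z$ --- part of the notion of solution, and in any case needed for \eqref{eq:din_tilde} to be meaningful --- and the integrability of the terms entering \eqref{eq:DGform}. For the implication ``$z$ solution $\Rightarrow$ \eqref{eq:DGform}'' this integrability is automatic, since the Fenchel equality then holds pointwise; for the converse it follows from the finiteness of the left-hand side of \eqref{eq:DGform} together with $\dert z\in L^1(0,T)$ and the boundedness of $s\mapsto D_z\tilde\EE(s,z(s))$ on the compact interval $[0,T]$. In the concrete settings relevant to the paper --- the wiggly systems \eqref{eq:din_eps} with $\EE_\epsilon$, $\Rc_\epsilon$ as in \eqref{eq:defEeps}, \eqref{eq:defReps}, and the limit system \eqref{eq:din_lim} with $\EE$, $\Rc$ as in \eqref{eq:defE}, \eqref{eq:defRlim} --- the energy is $\CC^1$ and the dissipation potential is quadratic or piecewise linear, so solutions have exactly the regularity required and no extra assumptions are needed.
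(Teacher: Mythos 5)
Your proof is correct and follows essentially the same route as the paper: the chain rule for $t\mapsto\tilde\EE(t,z(t))$ reduces \eqref{eq:DGform} to an integral inequality whose integrand is of one sign by the Fenchel estimate \eqref{eq:FenEst}, forcing pointwise equality a.e., which the Legendre--Fenchel equivalence \eqref{eq:LegFenEq} identifies with \eqref{eq:din_tilde}. The only difference is organizational: the paper proves the more general Proposition \ref{prop:DGform_gen} for an arbitrary $\tilde\MM(x,\xi)\geq \xi x$ and obtains Proposition \ref{prop:DGform} as the special case $\tilde\MM=\tilde\Rc+\tilde\Rc^*$, whereas you carry out the same computation directly in that special case.
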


We now prove a slightly more general proposition, suitable to our purposes.
Let us replace the integral dissipation term in the left-hand side of \eqref{eq:DGform} with a term of the form
\begin{equation} \label{eq:dissipForm}
\tilde \DD(z)=\int_0^T \tilde \MM(\dert z(s), -D_z\tilde \EE(s,z))\dd s
\end{equation}
where we require
\begin{equation}
\tilde \MM(x,\xi)\geq {\xi}{x} \qquad \text{for every $x\in\R$, $\xi\in\R$}
\label{eq:DGfen_gen}
\end{equation}
Moreover let us define the set
\begin{equation}
\tilde \CC_\MM =\left\{(x,\xi) \colon \tilde \MM(x,\xi)= {\xi}{x} \right\}
\end{equation}

\begin{prop} \label{prop:DGform_gen}
A function $z\colon [0,T]\to \R$ satisfies
\begin{equation}
\tilde \EE(T,z(T))+\tilde \DD(z)
\leq \tilde \EE (0,z(0)) +\int_0^T{\partial_t\tilde\EE(s,z(s))}\dd s
\label{eq:DGform_gen}
\end{equation}
if and only if it satisfies 
\begin{equation}
\left(\dert z(t), -D_z\tilde \EE(t,z(t)) \right) \in \tilde \CC_\MM \qquad \text{for almost every $t\in[0,T]$}
\end{equation}
\end{prop}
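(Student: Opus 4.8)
The plan is to prove the two implications of Proposition~\ref{prop:DGform_gen} by exploiting the chain rule for $t\mapsto \tilde\EE(t,z(t))$ together with the pointwise inequality \eqref{eq:DGfen_gen}. First I would observe that, since $\tilde\EE\in\CC^1([0,T]\times\R,\R)$ and (implicitly, as in the concrete systems considered) $z$ is absolutely continuous, the composition $s\mapsto \tilde\EE(s,z(s))$ is absolutely continuous with
\begin{equation*}
\frac{\dd}{\dd s}\tilde\EE(s,z(s))=\partial_t\tilde\EE(s,z(s))+D_z\tilde\EE(s,z(s))\,\dert z(s)
\end{equation*}
for almost every $s\in[0,T]$. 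Integrating from $0$ to $T$ and rearranging yields the identity
\begin{equation*}
\tilde\EE(T,z(T))-\tilde\EE(0,z(0))-\int_0^T\partial_t\tilde\EE(s,z(s))\,\dd s=\int_0^T D_z\tilde\EE(s,z(s))\,\dert z(s)\,\dd s=-\int_0^T \scal{-D_z\tilde\EE(s,z(s))}{\dert z(s)}\,\dd s.
\end{equation*}

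With this identity in hand, inequality \eqref{eq:DGform_gen} becomes exactly
\begin{equation*}
\int_0^T\tilde\MM\bigl(\dert z(s),-D_z\tilde\EE(s,z(s))\bigr)\,\dd s\leq \int_0^T \scal{-D_z\tilde\EE(s,z(s))}{\dert z(s)}\,\dd s,
\end{equation*}
that is, $\int_0^T\bigl[\tilde\MM(\dert z(s),-D_z\tilde\EE(s,z(s)))-\scal{-D_z\tilde\EE(s,z(s))}{\dert z(s)}\bigr]\,\dd s\leq 0$. By the standing assumption \eqref{eq:DGfen_gen}, the integrand is nonnegative almost everywhere; hence the integral is $\leq 0$ if and only if the integrand vanishes almost everywhere, i.e. $\tilde\MM(\dert z(s),-D_z\tilde\EE(s,z(s)))=\scal{-D_z\tilde\EE(s,z(s))}{\dert z(s)}$ for a.e.\ $s$, which is precisely the assertion $(\dert z(t),-D_z\tilde\EE(t,z(t)))\in\tilde\CC_\MM$ for almost every $t\in[0,T]$. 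This gives both implications at once, since every step is an equivalence.

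The only genuine technical point — and the place where I would be most careful — is the chain rule: it requires $z$ to be absolutely continuous (or at least $W^{1,1}$), which is implicit in the notion of solution used here (a function satisfying the inclusion a.e., with $\dert z$ meaningful). I would state this regularity assumption explicitly, or note that it is inherited from the ambient setting of \eqref{eq:din_tilde}, and invoke the standard chain rule for compositions of a $\CC^1$ map with an absolutely continuous curve. Everything else is a one-line rearrangement plus the elementary fact that a nonnegative integrable function with nonpositive integral is zero almost everywhere. No convexity of $\tilde\MM$ or $\tilde\Rc$ is needed for this proposition — only the Fenchel-type bound \eqref{eq:DGfen_gen} — which is why the statement is genuinely more general than Proposition~\ref{prop:DGform}: taking $\tilde\MM(x,\xi)=\tilde\Rc(x)+\tilde\Rc^*(\xi)$, the hypothesis \eqref{eq:DGfen_gen} reduces to the Fenchel estimate \eqref{eq:FenEst} and $\tilde\CC_\MM$ to the set $\tilde\CC_{\Psi+\Psi^*}$ appearing in \eqref{eq:din_tilde_tras}, recovering Proposition~\ref{prop:DGform} via the Legendre--Fenchel equivalence \eqref{eq:LegFenEq}.
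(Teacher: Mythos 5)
Your proof is correct and follows essentially the same route as the paper: the chain rule for $s\mapsto\tilde\EE(s,z(s))$ turns \eqref{eq:DGform_gen} into the integral inequality $\int_0^T\tilde\MM\,\dd s\leq-\int_0^T D_z\tilde\EE\,\dert z\,\dd s$, and the pointwise bound \eqref{eq:DGfen_gen} then forces equality almost everywhere, which is exactly membership in $\tilde\CC_\MM$. Your explicit remark on the absolute continuity of $z$ needed for the chain rule is a reasonable clarification of a point the paper leaves implicit, but it does not change the argument.
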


\begin{proof}
Using the chain rule, we get that the estimate \eqref{eq:DGform_gen} is equivalent to
\begin{equation*}
\int_0^T{ \tilde \MM(\dert z(s), -D_z\tilde \EE(s,z))\dd s}\leq 
-\int_0^T{{D_z\tilde \EE(s,z(s))}{\dert z(s)}\dd s}
\end{equation*} 
Looking at  estimate \eqref{eq:DGfen_gen}, we get that \eqref{eq:DGform_gen} is true if and only if  equality in \eqref{eq:DGfen_gen} holds for almost every $t\in[0,T]$.
\end{proof}

We remark that Proposition \ref{prop:DGform_gen} applies to the case $\tilde \MM(x,\xi)=\Rc(x)+\Rc^*(\xi)$.  Thus Proposition \ref{prop:DGform} follows as an immediate corollary,   since, as we have seen, the Legendre-Fenchel equivalence implies the equivalence between \eqref{eq:din_tilde} and \eqref{eq:din_tilde_tras}. 


\paragraph{Convergence structure}

Our strategy to prove Theorem \ref{th:main} is  to consider the convergence of the systems only once they have been reformulated in the form \eqref{eq:DGform}.

Let us consider a family of energy functions $\EE_\epsilon\in\CC^1([0,T]\times\R,\R)$, and the corresponding dissipation functionals  $\DD_\epsilon$ of the form
\begin{equation} 
 \DD_\epsilon(z)=\int_0^T \MM_\epsilon(\dert z(s), -D_z \EE_\epsilon(s,z))\dd s
\end{equation}
where $\MM_\epsilon(x,\xi)\geq {\xi}{x}$ for every $x\in\R$, $\xi\in\R$.
We are given a family of functions $z_\epsilon\colon [0,T]\to \R$ that solve the associated evolution problems, i.e.~each $z_\epsilon$ satisfies the estimate
\begin{equation} \label{eq:DGest_eps}
 \EE_\epsilon(T,z(T))+\DD_\epsilon(z)
\leq  \EE_\epsilon (0,z(0)) +\int_0^T{\partial_t\EE_\epsilon(s,z(s))}\dd s
\end{equation}
Then, we consider a limit  energy function $\EE\in\CC^1([0,T]\times\R,\R)$, and a limit dissipation functional $\DD$ of the form
\begin{equation} 
 \DD(z)=\int_0^T \MM(\dert z(s), -D_z \EE(s,z))\dd s
\end{equation}
where $\MM(x,\xi)\geq {\xi}{x}$ for every $x\in\R$, $\xi\in\R$. We define
\begin{equation}
\CC_\MM =\left\{(x,\xi) \colon \MM(x,\xi)= {\xi}{x} \right\}
\end{equation}

\begin{prop} \label{prop:conv}
Let $ \EE_\epsilon, \DD_\epsilon, z_\epsilon, \EE$ and $\DD$ be as above. Assume that there exists a continuous function $\bar z\colon [0,T]\to R$ such that $z_\epsilon\to \bar z$ in $\CC([0,T],\R)$. Suppose that, for every $t\in[0,T]$, the following estimates hold
\begin{gather*}
\EE(t,\bar z(t))\leq \liminf_{\epsilon\to 0} \EE_\epsilon(t,z_\epsilon(t))\\
\partial_t\EE(t,\bar z(t))=\lim_{\epsilon\to 0} \partial_t\EE_\epsilon(t,z_\epsilon(t)) \\
\DD(\bar z)\leq \liminf_{\epsilon\to 0} \DD_\epsilon(z_\epsilon)
\end{gather*}
and moreover
\begin{equation}
\EE(0,\bar z(0))=\lim_{\epsilon\to 0} \EE_\epsilon(0,z_\epsilon(0)) 
\end{equation}
Then $\bar z$ is a solution of the problem
\begin{equation} \label{eq:din_Clim} 
\left(\dert{ \bar z}(t), -D_z\tilde \EE(t,\bar z(t)) \right) \in  \CC_\MM \qquad \text{for almost every $t\in[0,T]$}
\end{equation}
\end{prop}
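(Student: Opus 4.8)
The plan is to pass to the limit in the energy-dissipation estimate \eqref{eq:DGest_eps} written for $z_\epsilon$, and then invoke Proposition \ref{prop:DGform_gen}. More precisely, rewrite \eqref{eq:DGest_eps} as
\begin{equation*}
\EE_\epsilon(T,z_\epsilon(T))+\DD_\epsilon(z_\epsilon) - \int_0^T \partial_t\EE_\epsilon(s,z_\epsilon(s))\dd s \leq \EE_\epsilon(0,z_\epsilon(0)).
\end{equation*}
First I would take the $\liminf$ as $\epsilon\to 0$ on the left-hand side and the (full) limit on the right-hand side. Using that $\liminf$ is superadditive and that, by the hypothesis on $\partial_t\EE_\epsilon$ together with a domination argument (the $\partial_t\EE_\epsilon(s,z_\epsilon(s))$ converge pointwise in $s$ and one may invoke dominated convergence, the bound being available from the structure $\EE_\epsilon=\Phi-\ell z+V_\epsilon$ where $\partial_t\EE_\epsilon = -\dert\ell(t) z$), the time-integral term passes to the limit as an equality:
\begin{equation*}
\lim_{\epsilon\to0}\int_0^T \partial_t\EE_\epsilon(s,z_\epsilon(s))\dd s = \int_0^T \partial_t\EE(s,\bar z(s))\dd s.
\end{equation*}
Combining this with the three lower-semicontinuity inequalities in the statement and the convergence of the initial energies yields
\begin{equation*}
\EE(T,\bar z(T))+\DD(\bar z) \leq \EE(0,\bar z(0)) + \int_0^T \partial_t\EE(s,\bar z(s))\dd s,
\end{equation*}
which is precisely the hypothesis \eqref{eq:DGform_gen} of Proposition \ref{prop:DGform_gen} applied with $\tilde\EE=\EE$, $\tilde\MM=\MM$, $\tilde\DD=\DD$. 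That proposition then gives $(\dert{\bar z}(t),-D_z\EE(t,\bar z(t)))\in\CC_\MM$ for almost every $t$, which is the desired conclusion \eqref{eq:din_Clim}.

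One point requiring care is that Proposition \ref{prop:DGform_gen} is stated for functions $z$ on which the chain rule applies; so I would first note that $\bar z$, being a $\CC^0$ limit, needs enough regularity (absolute continuity / Sobolev in time) for $\DD(\bar z)$ and the chain rule $\frac{d}{dt}\EE(t,\bar z(t)) = \partial_t\EE(t,\bar z(t)) + D_z\EE(t,\bar z(t))\dert{\bar z}(t)$ to be meaningful. In the intended application this regularity comes for free from uniform a priori bounds on $\dert z_\epsilon$ (or is established separately before invoking this proposition), so I would simply record it as a standing assumption implicit in the phrasing "$\DD(\bar z)$" being finite/well-defined.

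The main obstacle I anticipate is the passage to the limit in the time-integral term $\int_0^T\partial_t\EE_\epsilon(s,z_\epsilon(s))\dd s$: the pointwise convergence hypothesis alone does not license exchanging limit and integral, so one must supply a uniform integrable bound (or uniform boundedness, since $[0,T]$ has finite measure) on $\partial_t\EE_\epsilon(s,z_\epsilon(s))$. This follows from uniform boundedness of the family $z_\epsilon$ in $\CC^0([0,T])$ — itself a consequence of their $\CC^0$-convergence — together with the explicit form $\partial_t\EE_\epsilon(s,z) = -\dert\ell(s)\,z$ and $\ell\in\CC^1$. Once this uniform bound is in hand, dominated convergence closes the argument, and the remainder of the proof is the essentially formal manipulation of inequalities sketched above.
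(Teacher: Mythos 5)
Your proposal is correct and follows essentially the same route as the paper: pass to the limit in the energy-dissipation estimate \eqref{eq:DGest_eps} using the stated convergence/lower-semicontinuity hypotheses to obtain \eqref{eq:DGform_gen} for $(\EE,\DD,\bar z)$, then conclude via Proposition \ref{prop:DGform_gen}. The extra care you take with dominated convergence for the $\partial_t\EE_\epsilon$ term and with the regularity of $\bar z$ only makes explicit what the paper leaves implicit.
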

\begin{proof}
The convergence assumptions, applied to \eqref{eq:DGest_eps}, lead to the estimate
\begin{equation}
\EE(T,\bar z(T))+ \DD(\bar z)\leq \EE (0,\bar z(0)) +\int_0^T{\partial_t\EE(s,\bar z(s))}\dd s
\end{equation}
The thesis follows from Proposition \ref{prop:DGform_gen}.
\end{proof}




\section{Proof of Theorem \ref{th:main} } \label{sec:proof}

We now implement the convergence strategy of Section \ref{sec:strat} to the situation described in Section \ref{sec:abstr}, in order to prove Theorem \ref{th:main}. From now on the symbols $\EE, \EE_\epsilon, \Rc, \Rc_\epsilon$, etc.~have the same properties and meaning considered in Section \ref{sec:abstr}. In addition, we assume that the hypothesis \eqref{eq:cond_zetazero} of Theorem \ref{th:main} holds. 

\medbreak

Our plan is to apply Proposition \ref{prop:conv}. To begin, we recall the definition \eqref{eq:defReps} of $\Rc_\epsilon$ and define $\DD_\epsilon$ and $\MM_\epsilon$  by setting
\begin{equation}
\MM_\epsilon(v,\xi)=\Rc_\epsilon(v)+\Rc_\epsilon^*(\xi)
=\frac{\epsilon^\gamma v^2}{2}+\frac{\xi^2}{2\epsilon^\gamma}
\end{equation}
By Proposition \ref{prop:DGform}, each function $z_\epsilon$ satisfies the estimate \eqref{eq:DGest_eps}.

\medbreak

The reformulation of the limit system requires a little more attention. Let us first define the set $\Omega_0=[\rho_-,\rho_+]$ and denote, for any set $A\subseteq \R$,
\begin{equation}
\mychi_A(\xi)=\begin{cases}
0 &\text{for $\xi\in A$}\\
+\infty &\text{for $\xi\notin A$}
\end{cases}
\end{equation}
To define the functions $\DD$ and $\MM$, instead of the trivial choice associated to De Giorgi's formulation of problem \eqref{eq:din_lim}, we set
\begin{equation} \label{eq:defMlim}
\MM(v,\xi)=\abs{v}K(\xi)+\mychi_{\Omega_0}(\xi)
\end{equation}
where \begin{equation}
K(\xi)=\int_0^1{\abs{\xi-W'(y)}\dd y}
\end{equation}
Since $W'$ is continuous, $1$-periodic with zero average and has image $\Omega_0$, we deduce that $K(\xi)>\abs{\xi}$ if $\xi \in \inter \Omega_0$, whereas $K(\xi)=\abs{\xi}$ if  $\xi \notin \inter \Omega_0$. As a consequence, we obtain the desired estimate $\MM(x,\xi)\geq {\xi}{x}$. Moreover we have
\begin{equation}
\CC_\MM=\left( \{0\} \times \Omega_0 \right) \cup \left( (-\infty,0) \times \{\rho_-\} \right)\cup \left( (0,+\infty) \times \{\rho_+\} \right)
\end{equation}

Recalling the definition \eqref{eq:defRlim} of $\Rc$, we have $\Rc^*(\xi)=\mychi_{\Omega_0}(\xi)$, and so $\CC_\MM=\CC_{\Rc+\Rc^*}$. This means that, by Proposition \ref{prop:DGform}, problem \eqref{eq:din_Clim} is equivalent to \eqref{eq:din_lim}.

\medbreak

To apply Proposition \ref{prop:conv} and complete the proof of Theorem \ref{th:main}, it is left to prove
\begin{itemize}
\item  the existence of a limit function $\bar z$, such that $z_\epsilon\to \bar z$ in $\CC([0,T],\R)$;
\item that the estimate $\displaystyle\DD(\bar z)\leq \liminf_{\epsilon\to 0} \DD_\epsilon(z_\epsilon)$ holds.
\end{itemize} 
These will be the subjects of the next two subsections.

\subsection{Convergence of the solutions} \label{subsec:solut}

\paragraph{Preliminary notation}

Without loss of generality, we restrict our discussion to the interval $\epsilon\in (0,\bar \epsilon]$, where $\bar \epsilon$ is sufficiently small to satisfy $\bar \epsilon<\min \{1, \epsilon_Q\}$. We set $\beta=\min\{1,\gamma\}$ and notice that, for the values of $\epsilon$ considered, we have $\epsilon^\beta=\max\{\epsilon,\epsilon^{\gamma}\}$. 

Let us also introduce the following notations for some recurrent constants.
We call $\Lambda_\ell$  the Lipschitz constant of $\ell$.  By the uniform convexity of $\Phi$, we can find a constant $\phi>0$ such that $\Phi''(z)>\phi$ for all $z\in\R$. Since  $W$ and its first derivative are bounded, we denote
\begin{align*}
C_{W,0}&=\norm{W}_\infty &
C_{W,1}&=\norm{W'}_\infty 
\end{align*}


\paragraph{Strip of admissible solutions}
Let us now define $\tilde z_\pm \colon [0,T]\to \R$ as 
\begin{align}
\tilde z_-(t)&=(\Phi')^{-1}(\ell (t)-\rho_+) & \tilde z_+(t)&=(\Phi')^{-1}(\ell (t)-\rho_-)
\end{align}
We recall that this definition is well-posed since, by the uniform convexity of $\Phi$, $\Phi'$ is globally invertible and $\Imm \Phi'=\R$.  Since the image of $\ell$ is bounded, by compactness arguments, we also have
\begin{equation*}
C_{\pm}=\max\{\norm{\dert{\tilde z}_+}_\infty, \norm{\dert{\tilde z}_-}_\infty\} <+\infty
\end{equation*}
We notice that condition \eqref{eq:cond_zetazero} can be restated by writing $z^0\in [\tilde z_-(0),\tilde z_+(0)]$. Moreover, looking carefully at the inclusion \eqref{eq:din_lim}, we observe that the solution $\bar{z}$ is bounded between $\tilde z_-$ and $\tilde z_+$, and the current state can possibly  change (i.e. $\dert{\bar z}(t)\neq 0$) only if $\bar{z}=\tilde z_-$ (and therefore $\dert{\bar z}(t)\geq 0$) or $\bar{z}=\tilde z_+$ (and therefore $\dert{\bar z}(t)\leq 0$). The strip $[\tilde z_-(t),\tilde z_+(t)]$ gives the evolution of the elastic domains of the limit system.

Hence, we define the distance at each time $t$ of a solution $z_\epsilon$ of \eqref{eq:din_eps} from this region, by setting
\begin{equation*}
\delta_\epsilon(t)=\dist\left(z_\epsilon(t),[\tilde z_-(t),\tilde z_+(t)]\right)
\end{equation*}
Notice that \eqref{eq:cond_zetazero} implies $\delta_\epsilon(0)\to 0$ for $\epsilon\to 0$.


\paragraph{Estimates on $z_\epsilon$} Let us recall that, by \eqref{eq:din_eps}, the solution $z_\epsilon$ satisfies
\begin{equation}
\epsilon^\gamma \dert z_\epsilon(t)= -\Phi'(z_\epsilon(t)) -W'\left(\frac{z_\epsilon(t)}{\epsilon}\right) - \epsilon Q'\left(\epsilon;\frac{z_\epsilon(t)}{\epsilon}\right) +\ell(t)
\end{equation}
The value of $\delta_\epsilon(t)$ is controlled by the following estimate.
\begin{lemma} \label{lemma:est1} There exists a constant $C_0>0$ such that, for every $t\in[0,T]$ and $\epsilon\in(0,\bar \epsilon)$, we have
\begin{equation}
\delta_\epsilon(t)\leq \delta_\epsilon(0)e^{{-\phi t}/{\epsilon^\gamma}}+\epsilon^\beta C_0  
\end{equation}
Moreover, if $t\in [0,T]$ is such that $\delta_\epsilon(t)>\epsilon^\beta C_0$, we have that $\dert \delta_\epsilon (t)<0$.
\end{lemma}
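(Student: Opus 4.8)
\textbf{Plan for the proof of Lemma \ref{lemma:est1}.}

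The plan is to estimate the one-sided rate of change of $\delta_\epsilon(t)$ along a solution and then integrate a differential inequality. Since $\tilde z_-(t) \le \tilde z_+(t)$ partition $\R$ into three regions, I would argue separately according to where $z_\epsilon(t)$ sits. If $z_\epsilon(t) \in [\tilde z_-(t),\tilde z_+(t)]$ then $\delta_\epsilon(t)=0$ and there is nothing to prove at that instant; so assume, say, $z_\epsilon(t) > \tilde z_+(t)$ (the case $z_\epsilon(t) < \tilde z_-(t)$ being symmetric). There $\delta_\epsilon(t) = z_\epsilon(t) - \tilde z_+(t)$, hence $\dert\delta_\epsilon(t) = \dert z_\epsilon(t) - \dert{\tilde z}_+(t)$, and from the ODE for $z_\epsilon$,
\begin{equation*}
\epsilon^\gamma \dert\delta_\epsilon(t) = -\Phi'(z_\epsilon(t)) + \ell(t) - W'\!\left(\tfrac{z_\epsilon(t)}{\epsilon}\right) - \epsilon Q'\!\left(\epsilon;\tfrac{z_\epsilon(t)}{\epsilon}\right) - \epsilon^\gamma \dert{\tilde z}_+(t).
\end{equation*}

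The heart of the argument is then a monotonicity estimate for the term $-\Phi'(z_\epsilon(t)) + \ell(t)$. By definition $\Phi'(\tilde z_+(t)) = \ell(t) - \rho_-$, so $-\Phi'(z_\epsilon(t)) + \ell(t) = \rho_- - \bigl(\Phi'(z_\epsilon(t)) - \Phi'(\tilde z_+(t))\bigr)$, and uniform convexity ($\Phi'' > \phi$) together with $z_\epsilon(t) - \tilde z_+(t) = \delta_\epsilon(t) > 0$ gives $\Phi'(z_\epsilon(t)) - \Phi'(\tilde z_+(t)) \ge \phi\,\delta_\epsilon(t)$. Since $\rho_- = \min W' \le W'(z_\epsilon(t)/\epsilon)$, the two $W'$-terms combine favorably: $\rho_- - W'(z_\epsilon(t)/\epsilon) \le 0$. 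Collecting everything, and bounding $|\epsilon Q'| \le \epsilon C_{Q,1} \le \epsilon^\beta C_{Q,1}$ and $\epsilon^\gamma |\dert{\tilde z}_+| \le \epsilon^\beta C_\pm$, I get
\begin{equation*}
\epsilon^\gamma \dert\delta_\epsilon(t) \le -\phi\,\delta_\epsilon(t) + \epsilon^\beta\bigl(C_{Q,1} + C_\pm\bigr).
\end{equation*}
Setting $C_0 = (C_{Q,1} + C_\pm)/\phi$ makes the right-hand side negative precisely when $\delta_\epsilon(t) > \epsilon^\beta C_0$, which is the second assertion of the lemma; the case $z_\epsilon(t) < \tilde z_-(t)$ is handled identically using $\Phi'(\tilde z_-(t)) = \ell(t) - \rho_+$ and $\rho_+ = \max W' \ge W'(z_\epsilon(t)/\epsilon)$.

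For the first assertion I would integrate the differential inequality. On any maximal interval where $\delta_\epsilon > 0$ (so that $\delta_\epsilon$ is differentiable, $z_\epsilon$ being smooth off the strip), Gr\"onwall applied to $\epsilon^\gamma \dert\delta_\epsilon \le -\phi\,\delta_\epsilon + \epsilon^\beta \phi C_0$ yields $\delta_\epsilon(t) \le \bigl(\delta_\epsilon(t_0) - \epsilon^\beta C_0\bigr)e^{-\phi(t-t_0)/\epsilon^\gamma} + \epsilon^\beta C_0$ starting from the left endpoint $t_0$ of the interval; combined with the fact that whenever $\delta_\epsilon$ vanishes it cannot exceed $\epsilon^\beta C_0$ again without first crossing that threshold (where $\dert\delta_\epsilon < 0$), one obtains the stated bound $\delta_\epsilon(t) \le \delta_\epsilon(0)e^{-\phi t/\epsilon^\gamma} + \epsilon^\beta C_0$ for all $t$ — the trickiest bookkeeping being to patch together the estimate across times when $z_\epsilon$ enters and leaves the strip, which is where I expect the main (though purely technical) obstacle to lie. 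One clean way around it: show that $t \mapsto \max\{\delta_\epsilon(t) - \epsilon^\beta C_0,\, 0\}$ is non-increasing (it strictly decreases where positive and stays $0$ otherwise), so that $\delta_\epsilon(t) \le \max\{\delta_\epsilon(0),\epsilon^\beta C_0\} \le \delta_\epsilon(0) + \epsilon^\beta C_0$, and then sharpen to the exponential form on the first interval where $\delta_\epsilon$ stays above $\epsilon^\beta C_0$.
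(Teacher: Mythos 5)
Your proposal is correct and follows essentially the same route as the paper: the same differential inequality $\epsilon^\gamma\dert\delta_\epsilon\leq-\phi\,\delta_\epsilon+\epsilon^\beta(C_{Q,1}+C_\pm)$ obtained via $\Phi'(\tilde z_\pm)=\ell-\rho_\mp$, uniform convexity, and the sign of $\rho_\mp-W'$, with the same constant $C_0=(C_{Q,1}+C_\pm)/\phi$ and a Gr\"onwall argument. Your explicit patching across entries into and exits from the strip (via the monotonicity of $\max\{\delta_\epsilon-\epsilon^\beta C_0,0\}$) just spells out what the paper compresses into ``a suitable application of Gronwall's Lemma.''
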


\begin{proof}
If $z_\epsilon(t)\in(\tilde z_-(t),\tilde z_+(t))$, then the estimate follows immediately. 
Let us now consider the case $z_\epsilon(t)\geq \tilde z_+(t)$.
We have
\begin{align*}
\epsilon^\gamma \dert \delta_\epsilon 
&= \epsilon^\gamma \dert z_\epsilon - \epsilon^\gamma \dert{\tilde z}_+ \\
&\leq -\Phi'(z_\epsilon) -W'\left(\frac{z}{\epsilon}\right) - \epsilon Q'\left(\epsilon;\frac{z}{\epsilon}\right)  +\Phi'(\tilde z_+)+\rho_- +\epsilon^\gamma C_\pm\\
&\leq -\Phi'(z_\epsilon)+\Phi'(\tilde z_+) +\epsilon C_{Q,1} +\epsilon^\gamma C_\pm\\
&\leq -\phi\delta_\epsilon +\epsilon C_{Q,1} +\epsilon^\gamma C_\pm\\
&\leq -\phi\delta_\epsilon + C_1\epsilon^\beta
\end{align*}
where $C_1=C_\pm + C_{Q,1}$. The same estimate can be obtained analogously in the case $z_\epsilon(t)\geq \tilde z_+(t)$. Thus the required estimate for $\delta_\epsilon$ follows, by a suitable application of Gronwall's Lemma, for with $C_0=C_1/\phi$.
\end{proof}

Let us notice that, combining Lemma \ref{lemma:est1} with assumption \eqref{eq:cond_zetazero} and the Lipschitz continuity of $\tilde{z}_\pm$, it can be shown that all the solutions $z_\epsilon$ are bounded within an interval $[z_\mathrm{min},z_\mathrm{max}]$. By compactness, in this interval the function $\Phi'$ is Lipschitz continuous with Lipschitz constant $\Lambda_{\Phi'}$.

\begin{lemma} \label{lemma:est2}
For every $C_2>0$, there exists $C_3>0$ such that, for every $\epsilon\in (0,\bar \epsilon)$  and every solutions $z_\epsilon$ of  \eqref{eq:din_eps}, if  
\begin{equation} \label{eq:est2ass}
\delta_\epsilon(t_0 )\leq \epsilon^\beta C_2 \qquad \text{for some $t_0\in [0,T]$}
\end{equation}
then
\begin{equation*}
\abs{z_\epsilon(t)-z_\epsilon (t_0)}\leq \epsilon^\beta  C_3 \qquad \text{for every $t\in I_\epsilon^0=[t_0,t_0+\epsilon^\beta]\cap [0,T]$}
\end{equation*}
\end{lemma}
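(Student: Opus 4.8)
The plan is to establish separately the two one-sided bounds $z_\epsilon(t)\le z_\epsilon(t_0)+C_3\epsilon^\beta$ and $z_\epsilon(t)\ge z_\epsilon(t_0)-C_3\epsilon^\beta$ for $t\in I_\epsilon^0$; they are related by the symmetry exchanging $\tilde z_+\leftrightarrow\tilde z_-$, $\rho_-\leftrightarrow\rho_+$ and the maximisers/minimisers of $W'$, so I only describe the upper bound. Write $z_0=z_\epsilon(t_0)$ and set $n(t,z)=\ell(t)-\Phi'(z)$, so that \eqref{eq:din_eps} reads $\epsilon^\gamma\dert z_\epsilon=n(t,z_\epsilon)-W'(z_\epsilon/\epsilon)-\epsilon Q'(\epsilon;z_\epsilon/\epsilon)$. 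Throughout I use that all solutions remain in the fixed compact interval $[z_\mathrm{min},z_\mathrm{max}]$ (cf.\ the remark following Lemma \ref{lemma:est1}), taken large enough to contain a neighbourhood of the strip $[\tilde z_-(t),\tilde z_+(t)]$, $t\in[0,T]$, on which $\Phi'$ is Lipschitz with constant $\Lambda_{\Phi'}$.

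The step I expect to be the main obstacle is recognising that none of the crude estimates suffice: the velocity bound $\abs{\dert z_\epsilon}=O(\epsilon^{-\gamma})$ only yields a displacement $O(\epsilon^{\beta-\gamma})$, which is $O(1)$ when $\gamma\le 1$, and the bound $\delta_\epsilon\le\epsilon^\beta(C_2+C_0)$ from Lemma \ref{lemma:est1} is of no help because the strip itself has width of order one. What one must exploit is the pinning caused by the fast oscillation $W'(\cdot/\epsilon)$: the $z$-dependence of the elastic force alone already drives the effective tilt $n(t,z)$ strictly below $\max W'=\rho_+$ once $z$ has moved $O(\epsilon^\beta)$ to the right of $z_0$. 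Quantitatively, from $\delta_\epsilon(t_0)\le C_2\epsilon^\beta$ and $\Phi'(\tilde z_-(t))=\ell(t)-\rho_+$ one gets $n(t_0,z_0)\le\rho_++\Lambda_{\Phi'}C_2\epsilon^\beta$; then, using $\Phi''\ge\phi$, $\abs{\dert\ell}\le\Lambda_\ell$ and $\epsilon\le\epsilon^\beta$ (valid since $0<\epsilon<\bar\epsilon<1$ and $0<\beta\le 1$), for every $t\in I_\epsilon^0$ and every $z\ge z_0+M\epsilon^\beta$ with $M:=(\Lambda_{\Phi'}C_2+\Lambda_\ell+2C_{Q,1})/\phi$ one obtains
\[
n(t,z)\le n(t_0,z_0)+(\Lambda_\ell-\phi M)\epsilon^\beta\le\rho_++(\Lambda_{\Phi'}C_2+\Lambda_\ell-\phi M)\epsilon^\beta=\rho_+-2C_{Q,1}\epsilon^\beta\le\rho_+-2C_{Q,1}\epsilon .
\]

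It then remains to argue by contradiction. Put $L_1=z_0+(M+2)\epsilon^\beta$ and suppose $z_\epsilon$ reaches $L_1$ at some first time $t_1\in I_\epsilon^0$. Since $\{x:W'(x/\epsilon)=\rho_+\}$ is $\epsilon$-periodic and nonempty, pick $z^*\in[z_0+M\epsilon^\beta,\,z_0+(M+1)\epsilon^\beta]\subset(z_0,L_1)$ with $W'(z^*/\epsilon)=\rho_+$, and let $t^*\in(t_0,t_1)$ be the last time before $t_1$ at which $z_\epsilon(t^*)=z^*$; then $z_\epsilon>z^*$ on $(t^*,t_1]$, so $\dert z_\epsilon(t^*)\ge 0$ by continuity of $\dert z_\epsilon$. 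On the other hand, by the displayed estimate and $\abs{Q'(\epsilon;z^*/\epsilon)}<C_{Q,1}$,
\[
\epsilon^\gamma\dert z_\epsilon(t^*)=n(t^*,z^*)-\rho_+-\epsilon Q'(\epsilon;z^*/\epsilon)\le-2C_{Q,1}\epsilon+C_{Q,1}\epsilon=-C_{Q,1}\epsilon<0 ,
\]
a contradiction. Hence $z_\epsilon\le L_1$ on $I_\epsilon^0$. Running the symmetric argument with a minimiser of $W'$ in $[z_0-(M+1)\epsilon^\beta,\,z_0-M\epsilon^\beta]$ (where the analogous computation gives $\epsilon^\gamma\dert z_\epsilon\ge C_{Q,1}\epsilon>0$) yields $z_\epsilon\ge z_0-(M+2)\epsilon^\beta$ on $I_\epsilon^0$. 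Since $M$, and hence $C_3:=M+2$, depends only on $C_2$ and the fixed data (and the argument is uniform over solutions $z_\epsilon$ and over $\epsilon\in(0,\bar\epsilon)$), this proves the lemma.
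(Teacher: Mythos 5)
Your argument is correct and follows essentially the same route as the paper: both proofs pin the solution with barrier points located within one $\epsilon$-period of $z_\epsilon(t_0)+O(\epsilon^\beta)$ at which $W'(\cdot/\epsilon)$ attains $\rho_\pm$, and use the uniform convexity of $\Phi$, the Lipschitz bounds on $\ell$ and $\Phi'$, and the bound on $Q'$ to make the total force point back toward $z_\epsilon(t_0)$ there throughout $I_\epsilon^0$ (your $M+2$ plays the role of the paper's $1+C_4/\phi$). Your explicit first-crossing/last-hitting-time contradiction is just a spelled-out version of the paper's assertion that the solution cannot cross $\zeta_\pm$.
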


\begin{proof}
Let us set
\begin{equation*}
b_\epsilon(z)=-\Phi'(z)-V'_\epsilon(z)
\end{equation*}
We plan to find two points $\zeta_-$ and $\zeta_+$ such that
\begin{equation*}
z_\epsilon (t_0)-\epsilon^\beta C_3  \leq \zeta_- \leq z_\epsilon (t_0)
\leq \zeta_+ \leq z_\epsilon (t_0)+\epsilon^\beta C_3 
\end{equation*}
and, for every $t\in I_\epsilon^0$,
\begin{equation*}
b_\epsilon(\zeta_-)+\ell(t)>0  \qquad \text{and} \qquad
b_\epsilon(\zeta_+)+\ell(t)< 0
\end{equation*}
This last condition implies that every solution of \eqref{eq:din_eps} starting at $t_0$ inside the interval $[\zeta_-,\zeta_+]$ cannot cross its boundary in the time interval $I^0_\epsilon$. 

\medskip

We present the proof only for $\zeta_+$, since $\zeta_-$ can be found similarly. Let $y_+\in \R$ be any point such that $W'(y_+)=\rho_+$; we will look for $\zeta_+\in y_+ + \epsilon\Z$, so that $W'(\zeta_+)=\rho_+$.
We know that, for every $t\in I_\epsilon^0$,
\begin{align*}
\ell(t) &\leq \ell(t_0)+\Lambda_\ell (t-t_0)\leq \Phi'(\tilde z_-(t_0))+\rho_+ +\epsilon^\beta\Lambda_\ell  \\
&\leq \Phi'(z_\epsilon(t_0))+\epsilon^\beta\Lambda_{\Phi'} C_2  +\rho_+ +\epsilon^\beta\Lambda_\ell 
\end{align*}
Thus we have
\begin{align*}
b_\epsilon(\zeta_+)+\ell(t) &\leq -\Phi'(\zeta_+)-\rho_+ +\epsilon C_{Q,1} 
 +\Phi'(z_\epsilon(t_0))+\epsilon^\beta C_2  \Lambda_{\Phi'} +\rho_+ +\epsilon^\beta \Lambda_\ell   \\
 &\leq -(\zeta_+-z_\epsilon(t_0))\phi +\epsilon^\beta C_4
\end{align*}
where $C_4=C_{Q,1} +\Lambda_{\Phi'} C_2+\Lambda_\ell$. Therefore we take the smallest value $\zeta_+\in y_+ + \epsilon\Z$  satisfying $\zeta_+> z_\epsilon(t_0) +\frac{\epsilon^\beta C_4}{\phi}$.  This choice gives one part of the thesis with $C_3=1+C_4/\phi$.

We proceed similarly for $\zeta_-$  and conclude the proof.
\end{proof}

\begin{lemma} \label{lemma:est3}
There exists a constant $C>0$ such that, for every $s,t\in[0,T]$, the following estimate holds:
\begin{gather}
\abs{z_\epsilon(t)-z_\epsilon(s)}\leq C(\delta_\epsilon(0)+\abs{t-s}+\epsilon^\beta) \label{eq:est3} 
\end{gather}
\end{lemma}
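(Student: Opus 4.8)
The plan is to combine the exponential decay of $\delta_\epsilon$ from Lemma \ref{lemma:est1} with the short-time stability estimate of Lemma \ref{lemma:est2}, splitting the behaviour of $z_\epsilon$ into a ``fast relaxation'' phase and a ``slow drift'' phase. First I would introduce the constant $C_0$ of Lemma \ref{lemma:est1} and distinguish a threshold: let $\tau_\epsilon$ be the first time at which $\delta_\epsilon(\tau_\epsilon)\leq (C_0+1)\epsilon^\beta$ (set $\tau_\epsilon=0$ if this already holds at $t=0$). For $t\leq\tau_\epsilon$, Lemma \ref{lemma:est1} gives $\delta_\epsilon(t)\leq\delta_\epsilon(0)e^{-\phi t/\epsilon^\gamma}+\epsilon^\beta C_0$ and, crucially, $\dert\delta_\epsilon(t)<0$, so $z_\epsilon$ is monotone on $[0,\tau_\epsilon]$ and moves strictly towards the strip $[\tilde z_-,\tilde z_+]$; hence $|z_\epsilon(t)-z_\epsilon(0)|\leq\delta_\epsilon(0)-\delta_\epsilon(t)+|\tilde z_\pm(t)-\tilde z_\pm(0)|\leq\delta_\epsilon(0)+C_\pm|t|$, controlling the relaxation phase. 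One also sees $\tau_\epsilon$ is of order $\epsilon^\gamma\log(1/\epsilon)$ or smaller, but what actually matters is only the displacement bound just derived.

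Next, for $t\geq\tau_\epsilon$, Lemma \ref{lemma:est1} keeps $\delta_\epsilon(t)\leq (C_0+1)\epsilon^\beta + \epsilon^\beta C_0\le C_2\epsilon^\beta$ for a fixed $C_2$ (using that once $\delta_\epsilon$ drops below the threshold it cannot exceed $C_0\epsilon^\beta$ plus a decaying term, by the second assertion of Lemma \ref{lemma:est1}). Then I would cover the interval $[\tau_\epsilon,T]$ by $\lceil T/\epsilon^\beta\rceil$ consecutive subintervals of length $\epsilon^\beta$, and apply Lemma \ref{lemma:est2} on each: since $\delta_\epsilon\leq C_2\epsilon^\beta$ at the left endpoint of every such subinterval, we get an oscillation bound $C_3\epsilon^\beta$ over each subinterval. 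Telescoping over the subintervals between times $s$ and $t$ (both $\geq\tau_\epsilon$), the number of subintervals is at most $|t-s|/\epsilon^\beta + 1$, so
\[
|z_\epsilon(t)-z_\epsilon(s)|\leq C_3\epsilon^\beta\left(\frac{|t-s|}{\epsilon^\beta}+1\right)=C_3|t-s|+C_3\epsilon^\beta .
\]
Finally I would assemble the three regimes ($s,t\leq\tau_\epsilon$; $s,t\geq\tau_\epsilon$; and the mixed case, handled by inserting $\tau_\epsilon$ and using both bounds together with $|z_\epsilon(t)-z_\epsilon(s)|\le |z_\epsilon(t)-z_\epsilon(\tau_\epsilon)|+|z_\epsilon(\tau_\epsilon)-z_\epsilon(s)|$), and also absorb the case where $z_\epsilon(0)$ is already in the strip. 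Choosing $C$ large enough — e.g. $C=\max\{1,C_\pm,C_3\}$ up to a fixed multiplicative constant — yields \eqref{eq:est3}.

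The main obstacle I anticipate is the bookkeeping in the mixed regime and making sure the bound on $\delta_\epsilon(t)$ for $t\ge\tau_\epsilon$ is genuinely uniform: one must argue carefully from the ``$\dert\delta_\epsilon<0$ whenever $\delta_\epsilon>\epsilon^\beta C_0$'' clause of Lemma \ref{lemma:est1} that $\delta_\epsilon$ never climbs back above the $C_2\epsilon^\beta$ level, so that Lemma \ref{lemma:est2} is applicable at the start of \emph{every} subinterval, not just the first. The rest is routine telescoping and constant chasing.
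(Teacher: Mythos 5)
Your proposal is correct and follows essentially the same route as the paper: split into the monotone relaxation phase (bounded by $\delta_\epsilon(0)+C_\pm\abs{t-s}$ via Lemma \ref{lemma:est1}) and the regime where $\delta_\epsilon$ stays of order $\epsilon^\beta$, covered by subintervals of length $\epsilon^\beta$ on which Lemma \ref{lemma:est2} is telescoped, with the mixed case handled by inserting the transition time. The paper uses the threshold $2\epsilon^\beta C_0$ where you use $(C_0+1)\epsilon^\beta$, but this is immaterial.
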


\begin{proof}
Using Lemma \ref{lemma:est1} we can characterize the possible behaviours of $z_\epsilon$. 

If $\delta_\epsilon(0)\leq 2\epsilon^\beta C_0$, then $\delta_\epsilon(t)\leq 2\epsilon^\beta C_0$ for every $t\in[0,T]$. In this case the assumptions \eqref{eq:est2ass} of Lemma \ref{lemma:est2} are satisfied for every $t_0\in [0,T]$ by taking $C_2=2C_0$. Now, for every $s,t\in [0,T]$, we set $k\in \N$ such that $\abs{t-s}/\epsilon^\beta\leq k< \abs{t-s}/\epsilon^\beta +1$. We can therefore construct a partition $s=\tau_0<\tau_1<\dots <\tau_{k-1}<\tau_k=t$, such that, for every $i=1, \dots, k$, we have $\tau_i-\tau_{i-1}<\epsilon^\beta$. Thus we have
\begin{equation}
\abs{z_\epsilon(t)-z_\epsilon(s)}\leq \sum_{i=1}^k \abs{z_\epsilon(\tau_i)-z_\epsilon(\tau_{i-1})}\leq C_3 k\epsilon^\beta \leq C_3 (\abs{t-s}+\epsilon^\beta)
\label{eq:est3part1}
\end{equation}
where $C_3$ is given by Lemma \ref{lemma:est2} and does not depend on $\epsilon$.

On the other hand, if $\delta_\epsilon(0)> 2\epsilon^\beta C_0$,  Lemma \ref{lemma:est1} shows that the solution $z_\epsilon$ monotonically gets closer to the strip $[\tilde z_-, \tilde z_+]$, and \emph{possibly} at some time $t_\epsilon$ satisfies $\delta_\epsilon(t_\epsilon)=2\epsilon^\beta C_0$, so that $\delta_\epsilon(t)\leq 2\epsilon^\beta C_0$ for every $t\in [t_\epsilon,T]$. 
For $s,t\in [0,t_\epsilon]$ (or in $[0,T]$ if there is no such $t_\epsilon$), we have the estimate
\begin{equation}
\abs{z_\epsilon(t)-z_\epsilon(s)}\leq \delta_\epsilon(0)+C_\pm \abs{t-s}
\label{eq:est3part2}
\end{equation}
If there is a $t_\epsilon\in [0,T]$ as above, since $\delta_\epsilon(t)\leq 2\epsilon^\beta C_0$ for every $t\in [t_\epsilon,T]$ we can proceed as in the first part of the proof and the estimate \eqref{eq:est3part1} holds for every $s,t\in[t_\epsilon,T]$. 

We set $C=C_3+C_\pm+1$ and the proof is completed by combining \eqref{eq:est3part1} and \eqref{eq:est3part2}, possibly splitting the estimate in two parts if $s<t_\epsilon<t$.
\end{proof}


\paragraph{Convergence of the solutions $z_\epsilon$}
By Lemma \ref{lemma:est3} we obtain the equicontinuity of the family of functions $z_\epsilon\colon [0,T]\to \R$, for $\epsilon\in (0,\bar \epsilon]$. 
By the Ascoli-Arzelà Theorem we can find a subsequence ${(z_{\epsilon_i})}_{i\in\N}$ with $\epsilon_i\to 0$ for which there exists a continuous function $\bar z \colon [0,T]\to \R$ such that $z_{\epsilon_i}\to \bar z$ uniformly in $\CC([0,T])$. 
A second consequence of Lemma \ref{lemma:est3} is that $\bar z$ is Lipschitz continuous with constant $C$, that is $\abs{\bar z(t)-\bar{z}(s)}<C\abs{t-s}$ for every $s,t\in[0,T]$.

It remains to show that $\bar z$ is a solution of \eqref{eq:din_lim} and that actually  the whole sequence $z_\epsilon$ converges to $\bar z$, not only a subsequence $ z_{\epsilon_i}$. We will address these issues in Subsection \ref{subsec_completeproof}.


\subsection{Estimate on the dissipation functionals}
Let us write
\begin{align*}
\eta_\epsilon(t) &=-\Phi'(z_\epsilon(t))+\ell(t)\\
u_\epsilon (t) &=W'\left(\frac{z_\epsilon(t)}{\epsilon}\right)+\epsilon Q'\left(\epsilon ;\frac{z_\epsilon(t)}{\epsilon}\right)\\
\xi_\epsilon (t) &=\eta_\epsilon (t)-u_\epsilon (t)
\end{align*}

\begin{lemma} \label{lemma:dis}
Let $z_\epsilon,\bar z\in W^{1,1}([0,T])$ and $\eta_\epsilon,\bar \eta\in \CC^0([0,T])$ be such that, for $\epsilon\to 0$,
\begin{equation*}
z_\epsilon\to \bar z \qquad\text{and} \qquad \eta_\epsilon\to \bar \eta \qquad\text{in $\CC^{0}([0,T])$}.
\end{equation*}
Then
\begin{equation}
\liminf_{\epsilon\to_0} \int_0^T{\MM_\epsilon\bigl(\dert z_\epsilon(t), \xi_\epsilon(t)\bigr)\dd t}
\geq \int_0^T \MM\bigl(\dert{\bar z}(t), \eta(t)\bigr)\dd t
\end{equation}
\end{lemma}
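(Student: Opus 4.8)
The plan is to compare the viscous dissipation $\MM_\epsilon(v,\xi)=\frac{\epsilon^\gamma}{2}v^2+\frac{1}{2\epsilon^\gamma}\xi^2$ with the rate-independent $\MM(v,\eta)=\abs{v}K(\eta)+\mychi_{\Omega_0}(\eta)$ along the trajectories. The key observation is that by the pointwise Fenchel-type bound $\MM_\epsilon(v,\xi)\geq \xi v$ one always has $\int_0^T\MM_\epsilon(\dert z_\epsilon,\xi_\epsilon)\,\dd t\geq\int_0^T\xi_\epsilon\dert z_\epsilon\,\dd t$, but this crude bound loses exactly the oscillatory contribution that survives in the limit. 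The refined idea, following Mielke, is that $\xi_\epsilon=\eta_\epsilon-W'(z_\epsilon/\epsilon)-\epsilon Q'(\epsilon;z_\epsilon/\epsilon)$ oscillates rapidly, and the integral $\int \xi_\epsilon\dert z_\epsilon\,\dd t$ picks up, besides $\int\eta_\epsilon\dert z_\epsilon\,\dd t$, a term $-\int W'(z_\epsilon/\epsilon)\dert z_\epsilon\,\dd t$ whose primitive $-\epsilon W(z_\epsilon/\epsilon)$ vanishes uniformly; so that term alone is not enough. Instead, the right move is to bound $\MM_\epsilon$ from below using a \emph{modulated} test covector: for any smooth $1$-periodic $\psi$, completing the square gives
\begin{equation*}
\frac{\epsilon^\gamma}{2}v^2+\frac{1}{2\epsilon^\gamma}\xi^2\geq v\bigl(\xi-\psi(z/\epsilon)\bigr)+\psi(z/\epsilon)v-\frac{1}{2\epsilon^\gamma}\psi(z/\epsilon)^2+\text{(sign-definite remainder)},
\end{equation*}
and one chooses $\psi$ adapted to $W'$ so as to reconstruct the average $K(\eta)=\int_0^1\abs{\eta-W'(y)}\,\dd y$.

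Concretely, I would proceed as follows. First, reduce to the case where $\dert z_\epsilon$ has a sign on a given subinterval, or more robustly, fix $\eta\in\R$ and use that $K(\eta)=\sup\{\,\int_0^1 s(y)(\eta-W'(y))\,\dd y : \abs{s}\leq 1\,\}$, the sup being attained by $s=\sgn(\eta-W')$. For a parameter covector we exploit that along any arc where $z_\epsilon$ is monotone, $\int \sgn(\eta-W'(z_\epsilon/\epsilon))\,\dert z_\epsilon\,\dd t$ is, by the change of variables $y=z_\epsilon/\epsilon$ and periodicity, asymptotically $\abs{\dert{\bar z}}\cdot$(cell average), i.e.\ it converges to $\int \abs{\dert{\bar z}}K(\eta)\,\dd t/(\text{something})$ — this is the rapidly-oscillating averaging step. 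Second, use the elementary inequality, valid for all real $a,b,c$ with $\abs{c}\leq 1$,
\begin{equation*}
\frac{\epsilon^\gamma}{2}a^2+\frac{1}{2\epsilon^\gamma}b^2\geq a\,c\,\abs{b_0}+ \bigl(\text{terms that are } o(1)\text{ after integration}\bigr),
\end{equation*}
applied with $a=\dert z_\epsilon$, $b=\xi_\epsilon$, $c=\sgn(\eta_\epsilon-W'(z_\epsilon/\epsilon))$, to get
\begin{equation*}
\MM_\epsilon(\dert z_\epsilon,\xi_\epsilon)\geq \dert z_\epsilon\bigl[\eta_\epsilon - W'(z_\epsilon/\epsilon)\bigr]\sgn\bigl(\eta_\epsilon-W'(z_\epsilon/\epsilon)\bigr) - \tfrac{\epsilon^\gamma}{2}(\dert z_\epsilon)^2\cdot(\text{correction}),
\end{equation*}
then drop the perfect-square excess (which is $\geq 0$) and handle the $\epsilon Q'$ term, which contributes $O(\epsilon)$ uniformly by the bound $\abs{Q'}<C_{Q,1}$. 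Third, integrate over $[0,T]$, pass to the liminf: the energy-type term $\int\dert z_\epsilon\,\eta_\epsilon\,\dd t\to\int\dert{\bar z}\,\bar\eta\,\dd t$ by uniform convergence of $z_\epsilon,\eta_\epsilon$ plus weak-$*$ convergence of $\dert z_\epsilon$ (a primitive/integration-by-parts argument, since $\eta_\epsilon\in\CC^0$ converges uniformly and $z_\epsilon$ is equi-Lipschitz by Lemma~\ref{lemma:est3}), while the oscillatory term $-\int\dert z_\epsilon\,W'(z_\epsilon/\epsilon)\sgn(\cdots)\,\dd t$ is handled by the averaging lemma. Finally, on the set where $\bar z$ is ``stuck'' ($\dert{\bar z}=0$) one must also show $\eta=\bar\eta$ stays in $\Omega_0$ so that $\mychi_{\Omega_0}(\eta)=0$ does not spoil the bound — this follows because $z_\epsilon$ is pinned near the strip $[\tilde z_-,\tilde z_+]$ by Lemma~\ref{lemma:est1}, forcing $-\Phi'(z_\epsilon)+\ell\to\bar\eta\in[\rho_-,\rho_+]$ there, and outside that strip $K(\bar\eta)=\abs{\bar\eta}$ makes $\MM(\dert{\bar z},\bar\eta)=\abs{\dert{\bar z}}\,\abs{\bar\eta}$ consistent with $\int\xi_\epsilon\dert z_\epsilon\,\dd t\to\int\bar\eta\,\dert{\bar z}\,\dd t$.

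The main obstacle I anticipate is the rapidly-oscillating averaging step: making rigorous that $\int_{t_1}^{t_2}\! g\!\left(z_\epsilon(t)/\epsilon\right)\dert z_\epsilon(t)\,\dd t\to \left(\int_0^1 g\right)\!\int_{t_1}^{t_2}\dert{\bar z}(t)\,\dd t$ for $1$-periodic $g$, despite $z_\epsilon$ being only equi-Lipschitz (not monotone, possibly with $\dert z_\epsilon$ changing sign and with regions where $z_\epsilon$ hovers). One cannot naively change variables $y=z_\epsilon(t)/\epsilon$ globally. The fix is to write $g=\bar g+g_0$ with $\bar g=\int_0^1 g$ and $\int_0^1 g_0=0$, let $G_0$ be the $1$-periodic primitive of $g_0$, and integrate by parts: $\int g_0(z_\epsilon/\epsilon)\dert z_\epsilon\,\dd t=\epsilon[G_0(z_\epsilon/\epsilon)]_{t_1}^{t_2}\to 0$ since $G_0$ is bounded; the $\bar g$ part is immediate. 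Applying this with $g=\abs{\eta-W'(\cdot)}$ (for each fixed rational, then all real, $\eta$, with uniform control) and combining with the uniform convergence $\eta_\epsilon\to\bar\eta$ to replace $\eta$ by $\eta_\epsilon$ inside, yields exactly $\int\abs{\dert{\bar z}}K(\bar\eta)\,\dd t$ as the liminf lower bound, completing the proof.
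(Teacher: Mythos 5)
There is a genuine gap at the heart of your argument for the main term: your multiplier $c=\sgn\bigl(\eta_\epsilon-W'(z_\epsilon/\epsilon)\bigr)$ is blind to the direction of motion, so the lower bound it produces is $\dert z_\epsilon\,\abs{\eta_\epsilon-W'(z_\epsilon/\epsilon)}$, a \emph{signed} quantity. Your own averaging lemma (which is correct, and the integration-by-parts trick with the zero-mean primitive is a nice way to handle non-monotone $z_\epsilon$) then gives, after freezing $\eta$, a limit of the form $K(\eta)\int \dert{\bar z}\,\dd t$, i.e.\ $K$ times the \emph{net displacement}, not $\int K(\bar\eta)\,\abs{\dert{\bar z}}\,\dd t$ as the Lemma requires. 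On any stretch where $\bar z$ decreases this bound is negative and therefore useless: e.g.\ for leftward sliding at the threshold $\bar\eta=\rho_-$ the required right-hand side is $\abs{\rho_-}\int\abs{\dert{\bar z}}\,\dd t>0$ while your chain of inequalities only yields a nonpositive number. To repair this the sign must track the motion: choose $c=\sigma\,\sgn(\eta_\epsilon-W')$ with $\sigma=\pm1$ piecewise constant on a fine partition of $[0,T]$, the sign on each subinterval matching the sign of the net increment of $\bar z$ there, and then use $\bar z\in W^{1,1}$ so that the absolute increments (equivalently, the derivatives of the piecewise affine interpolants) converge in $L^1$ to $\abs{\dert{\bar z}}$. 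Once you do this you have essentially reconstructed the paper's proof, which works directly with $\abs{\dert z_\epsilon}\,\abs{\xi_\epsilon}$, minorizes $\abs{\xi_\epsilon(t)}\geq h^n_j(z_\epsilon(t)/\epsilon)$ on each subinterval $I^n_j$ with $h^n_j(y)=\inf\{\abs{\eta_{\tilde\epsilon}(s)-W'(y)-\tilde\epsilon Q'(\tilde\epsilon;y)}:\ s\in I^n_j,\ \tilde\epsilon<1/n\}$ (this also disposes rigorously of the time-dependence of $\eta$ that you defer to ``uniform control''), changes variables to get $\abs{z(\tfrac{j}{n}T)-z(\tfrac{j-1}{n}T)}\int_0^1 h^n_j(y)\,\dd y$, and then lets $n\to\infty$.

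A secondary point: the term $\mychi_{\Omega_0}(\bar\eta)$ must be handled inside the Lemma, whose hypotheses are only the stated convergences of $z_\epsilon$ and $\eta_\epsilon$; they do not include that $z_\epsilon$ solves \eqref{eq:din_eps}, so appealing to Lemma \ref{lemma:est1} (or to the equi-Lipschitz bound of Lemma \ref{lemma:est3}) imports information that is not available here, even if it is available in the application. The paper keeps the proof self-contained by splitting off a portion of the quadratic term, $\MM_\epsilon(v,\xi_\epsilon)\geq(1-\epsilon^{\gamma/2})\abs{v}\abs{\xi_\epsilon}+\tfrac{1}{2\epsilon^{\gamma/2}}\bigl[\dist\bigl(\eta_\epsilon,[\rho_--\epsilon C_{Q,1},\rho_++\epsilon C_{Q,1}]\bigr)\bigr]^2$, and applying Fatou: if $\bar\eta$ leaves $\Omega_0$ on a set of positive measure the left-hand side blows up, matching $\mychi_{\Omega_0}=+\infty$. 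The fix in your scheme is short (on such a set $\abs{\xi_\epsilon}$ is bounded away from $0$ for small $\epsilon$, so $\int\xi_\epsilon^2/(2\epsilon^\gamma)\to+\infty$), but it needs to be said, and it is independent of the dynamics.
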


\begin{proof}
Let us define the interval
\begin{equation*}
\Omega_\epsilon=\left[\rho_- -\epsilon C_{Q,1}\, , \, \rho_+ +\epsilon C_{Q,1} \right]
\end{equation*}
so that $u_\epsilon\in \Omega_\epsilon$ and $\Omega_0=[\rho_-,\rho_+]$, as defined above. We recall that
$\xi_\epsilon =\eta_\epsilon-u_\epsilon$, implying $\abs{\xi_\epsilon}\geq \dist(\eta_\epsilon, \Omega_\epsilon)$.

\medbreak

We therefore obtain the following lower bound for $\MM_\epsilon$:
\begin{equation} \label{eq:estDiss1}
\begin{split}
\MM_\epsilon(v,\xi_\epsilon)
&=\frac{\epsilon^\gamma v^2}{2}+\frac{(1-\epsilon^{\frac{\gamma}{2}})\xi_\epsilon^2}{2\epsilon^\gamma}+\frac{\epsilon^{\frac{\gamma}{2}}\xi_\epsilon^2}{2\epsilon^\gamma}\\
&\geq (1-\epsilon^{\frac{\gamma}{2}})\abs{v}\abs{\xi_\epsilon}
+\frac{1}{2\epsilon^{\frac{\gamma}{2}}}\left[\dist (\eta_\epsilon, \Omega_\epsilon)\right]^2
\end{split}
\end{equation}
We now derive two separate estimates for the two terms of the right hand side of \eqref{eq:estDiss1}.

\medbreak

For the second term, we observe that
\begin{equation}
\liminf_{\epsilon\to 0} \frac{1}{2\epsilon^{\frac{\gamma}{2}}}\left[\dist (\eta_\epsilon, \Omega_\epsilon)\right]^2 \geq \mychi_{\Omega_0}(\eta)
\end{equation}
so, by Fatou's Lemma, we obtain
\begin{equation} \label{eq:estDissSPart2}
\liminf_{\epsilon\to 0} \int_0^T{\frac{1}{2\epsilon^{\frac{\gamma}{2}}}\left[\dist (\eta_\epsilon(t), \Omega_\epsilon)\right]^2 \dd t} 
\geq \int_0^T{\mychi_{\Omega_0}(\eta(t))\dd t}
\end{equation}

\medbreak

To study the integral of the remaining term in \eqref{eq:estDiss1}, let us consider the integral
\begin{equation}
\De =\int_0^T{\abs{\dert z_\epsilon(t)}\abs{\xi_\epsilon(t)}\dd t}
\end{equation}
We define, for every integer $n>(\bar\epsilon)^{-1}$ and $j\in\{1,2,\dots,n\}$, the time interval
\begin{equation}
I_j^n=\left[\frac{j-1}{n}T, \frac{j}{n}T \right)
\end{equation}
to which we associate the value
\begin{equation*}
h_j^n(y)=\inf\left\{\abs{\eta_{\tilde\epsilon}(s) -W'(y)-\tilde \epsilon Q'(\tilde\epsilon;y)},\quad \text{for $s\in I^n_j$, $\tilde \epsilon\in \left(0,\frac{1}{n}\right)$} \right\}
\end{equation*}
We remark that $h^n_j$ is periodic with period $1$.
We also notice that, by definition, for every $t\in I^n_j$ and every $\epsilon\in \left(0,\frac{1}{n}\right)$ we have $\abs{\xi_\epsilon(t)}>h^n_j\left(\frac{z_\epsilon(t)}{\epsilon}\right)$.
Thus, for each $\epsilon<\frac{1}{n}$,
\begin{equation*}
\De \geq \sum_{j=1}^n \,\int_{I^n_j}{ \abs{\dert z_\epsilon (t)} h^n_j\left(\frac{z_\epsilon(t)}{\epsilon}\right) \dd t}
\end{equation*}
Let us now consider the case $z(\frac{j-1}{n}T)<z(\frac{j}{n}T)$. We have
\begin{multline*}
\int_{I^n_j}{ \abs{\dert z_\epsilon (t)} h^n_j\left(\frac{z_\epsilon(t)}{\epsilon}\right) \dd t}
\geq \int_{z_\epsilon(\frac{j-1}{n}T)}^{z_\epsilon(\frac{j}{n}T)}h^n_j\left(\frac{z}{\epsilon}\right) \dd z\\
\quad\xrightarrow{\epsilon\to 0} \quad
\left[z\left(\tfrac{j}{n}T\right) - z\left(\tfrac{j-1}{n}T\right) \right]
\int_{0}^{1}h^n_j(y) \dd y
\end{multline*}
since, due to the periodicity of $h^n_j$, for $\epsilon\to 0$ the integral of $h^n_j(z/\epsilon)$ on a given interval tends to the integral of the average value of $h^n_j$. Arguing similarly for $z(\frac{j-1}{n}T)>z(\frac{j}{n}T)$, we get
\begin{equation}
\liminf_{\epsilon\to 0} \int_{I^n_j}{ \abs{\dert z_\epsilon (t)} h^n_j\left(\frac{z_\epsilon(t)}{\epsilon}\right) \dd t}
\geq \abs{z\left(\tfrac{j}{n}T\right) - z\left(\tfrac{j-1}{n}T\right) }
\int_{0}^{1}h^n_j(y) \dd y
\label{eq:estDissDim1}
\end{equation}

\medbreak

Let $z_n$ be the piecewise affine interpolant such that $z_n(\frac{j}{n}T)=z(\frac{j}{n}T)$ for every $j+0,1,\dots,n$.
We define $k_n(t)$ as the average of $h^n_j$, where $j$ is the one such that $t\in I^n_j$, that means
\begin{equation*}
k_n(t)=\int_{0}^{1}h^n_j(y) \dd y \qquad \text{for $\frac{j-1}{n}T\leq t< \frac{j}{n}T$}
\end{equation*}
Thus, summing the estimates \eqref{eq:estDissDim1} for $j+1,\dots,n$, we get
\begin{equation*}
\liminf_{\epsilon\to 0} \De 
\geq \sum_{j=1}^n 
 \abs{z\left(\tfrac{j}{n}T\right) - z\left(\tfrac{j-1}{n}T\right) }
\int_{0}^{1}h^n_j(y) \dd y
=\int_0^T k_n(t)\abs{\dert z_n(t)} \dd t
\end{equation*}

Since we assumed that $z\in W^{1,1}([0,T])$, we know that $\dert z_n \to \dert z$ strongly in $L^1([0,T])$ for $n\to \infty$. Moreover, the uniform convergence $(\eta_\epsilon,z_\epsilon)\to (\eta,\bar z)$ assures us that $k_n(t)\to K(\eta(t))$ uniformly.
Thus we get
\begin{equation} \label{eq:estDissSPart1}
\liminf_{\epsilon \to 0} \De \geq \int_0^T \abs{\dert z(t)} K(\eta(t)) \dd t
\end{equation}

The proof is completed combining the estimates \eqref{eq:estDissSPart2} and \eqref{eq:estDissSPart1}.
\end{proof}


\subsection{Completion of the proof} \label{subsec_completeproof}

At the end of Subsection \ref{subsec:solut}, we have shown that there exist a subsequence  ${(z_{\epsilon_i})}_{i\in\N}$ with $\epsilon_i\to 0$ and a continuous function $\bar z \colon [0,T]\to \R$ such that $z_{\epsilon_i}\to \bar z$ uniformly in $\CC([0,T])$. 
Setting $\bar \eta (t) =-\Phi'(\bar z(t))+\ell(t)$, we can apply Lemma \ref{lemma:dis} to the subsequence $z_{\epsilon_i}$ and get that 
\begin{equation*}
\liminf_{i\to\infty} \DD_{\epsilon_i}(z_{\epsilon_i})\leq \DD(\bar z)
\end{equation*}
We can therefore apply Proposition \ref{prop:conv} to find that $\bar z$ is a solution of \eqref{eq:din_Clim} for $\bar z(0)=z_0$ and so, as we have seen, of \eqref{eq:din_lim}. 

It is however well known in literature that problem \eqref{eq:din_lim} has only one solution for each choice of $z_0$ (cf.~\cite{Mie05,MieRou15}). This implies that actually the whole sequence $z_\epsilon$ converges to $\bar z$. Suppose by contradiction that there exists a subsequence $(z_{\epsilon_k})_{k\in \N}$ with $\epsilon_k\to 0$ such that $\norm{z_{\epsilon_k}-\bar z}_{\infty}>\bar \delta$, for some $\bar \delta>0$ and every $k\geq 0$. Then we can repeat the same reasoning done for $z_\epsilon$, to find a  function $\hat z\in \CC([0,T,])$, and a subsequence of $z_{\epsilon_k}$ that converges to $\hat z$. But, proceeding as above, $\hat z$ must be a solution of \eqref{eq:din_Clim} with $\hat z(0)=z_0$, and so, because of the uniqueness of the solutions, $\hat z=\bar z$, contradicting  $\norm{z_{\epsilon_k}-\bar z}_{\infty}>~\bar \delta$.

To complete the proof, it remains only to prove \eqref{eq:th_Rconv}. Let us first notice that, since \eqref{eq:din_eps} gives $\epsilon^\gamma \dert z_\epsilon(t)=\xi_\epsilon(t)$, a straightforward computation shows that $\Rc_\epsilon(\dert z_\epsilon(t))=\Rc_\epsilon^*(\xi_\epsilon(t))$ for almost every $t\in [0,T]$.  Moreover, since $\Rc_\epsilon(\dert z_\epsilon(t))+\Rc_\epsilon^*(\xi_\epsilon(t))=\dert z_\epsilon(t)\xi_\epsilon(t)$ for almost every $t$, by the chain rule we get, for every $0\leq t_1<t_2\leq T$,
\begin{equation*}
\int_{t_1}^{t_2}2\Rc_\epsilon(z_\epsilon(s))\dd s=\EE_\epsilon(t_2,z_\epsilon(t_2))-\EE_\epsilon(t_1,z_\epsilon(t_1))+\int_{t_1}^{t_2}\dert\ell(s)z_\epsilon(s)\dd s
\end{equation*}
On the other hand, for the limit system, since $-D_z\EE(t,\bar z(t))\in \Omega_0$, it follows that $\Rc^*(-D_z\EE(t,\bar z(t)))=0$ for almost every $t\in [0,T]$. Thus, again by the chain rule
\begin{equation*}
\int_{t_1}^{t_2}\Rc(\bar z(s))\dd s=\EE(t_2,\bar z(t_2))-\EE(t_1,\bar z(t_1))+\int_{t_1}^{t_2}\dert \ell(s)\bar z(s)\dd s
\end{equation*}
Since, for $\epsilon\to 0$, we have $z_\epsilon\to \bar z$ uniformly and $\EE_\epsilon(t,z_\epsilon(t))\to \EE(t,\bar z(t))$ for every $t\in[0,T]$,  it follows that
\begin{equation*}
\int_{t_1}^{t_2}2\Rc_\epsilon(\dert z_\epsilon(s))\dd s \to \int_{t_1}^{t_2}\Rc(\dert{\bar z}(s))\dd s
\qquad \text{for every $0\leq t_1<t_2\leq T$} 
\end{equation*}
and the proof is complete.

\section*{Acknowledgement}
This work has been supported by the ERC Advanced Grant 340685-MicroMotility.





\begin{thebibliography}{XX}

\bibitem{aBS09} F. Al-Bender and J. Swevers. 
Characterization of friction force dynamics. 
\textit{IEEE Control Systems} 28 (2009), 64--81.

\bibitem{Arn07} V.I. Arnold,  V.V. Kozlov, and A.I. Neishtadt, 
\textit{Mathematical aspects of classical and celestial mechanics}. 
 Springer, Berlin (2007).
 
\bibitem{deW95} C. Canudas De Wit,  H. Olsson, K.J. Astrom and P. Lischinsky, 
A new model for control of systems with friction. 
\textit{IEEE TTrans. Autom. Control} 40 (1995), 419--425.

\bibitem{DriBer} B. Drincic and D. S. Bernstein, 
A sudden-release bristle model that exhibits hysteresis and stick-slip friction, in \textit{IEEE American Control Conference} (2011), 2456--2461. 

\bibitem{DeSGidNos15} A. DeSimone, P. Gidoni, and G. Noselli, G.,  
Liquid Crystal Elastomer Strips as Soft Crawlers. 
\textit{J. Mech. Phys. Solids} 85 (2015), 254--272.

\bibitem{DeSTat12} A. DeSimone and A. Tatone, 
Crawling motility through the analysis of model locomotors: two case studies.
\textit{Eur. Phys. J. E} 35:85 (2012).

\bibitem{GidDeS16} P. Gidoni and A. DeSimone,
Stasis domains and slip surfaces in the locomotion of a bio-inspired two-segment crawler.
\textit{Submitted for publication (2016).}

\bibitem{GidNosDeS14} P. Gidoni, G. Noselli and A. DeSimone,  
Crawling on directional surfaces.
\textit{Int. J. Non-Linear Mech.} 61 (2014), 65--73.


\bibitem{Han} M.J. Hancock., K. Sekeroglu, and M.C. Demirel, 
Bioinspired directional surfaces for adhesion, wetting, and transport. 
\textit{Adv. Funct. Mater.} 22  (2012), 2223--2234.

\bibitem{HaeFri} D.A. Haessig and B. Friedland,
On the modeling and simulation of friction.
 \textit{J. Dyn. Syst. Meas. Control.} 113 (1991),
354--362.

\bibitem{KwaShi04} M. Kwak and H. Shindo, 
Frictional force microscopic detection of frictional asymmetry and anisotropy at (10\={1}4) surface of calcite. 
\textit{Phys. Chem. Chem. Phys.} 6  (2004), 129--133.

\bibitem{Lil98} M. Liley, D. Gourdon, D. Stamou, U. Meseth, T. M. Fischer, C. Lautz, H. Stahlberg, H. Vogel, N. A. Burnham, and C. Duschl, 
Friction anisotropy and asymmetry of a compliant monolayer induced by a small molecular tilt. 
\textit{Science} 280 (1998), 273--275.

\bibitem{Mah04} L. Mahadevan, S. Daniel, and M.K. Chaudhury, 
Biomimetic ratcheting motion of a soft, slender, sessile gel.
 \textit{Proc. Natl. Acad. Sci. USA} 101 (2004), 23--26.

\bibitem{Men06} A. Menciassi, D. Accoto, S. Gorini, and P. Dario,
Development of a biomimetic miniature robotic crawler.
\textit{Auton. Robot} 21 (2006), 155--163.


\bibitem{Mie05} A. Mielke,  Evolution of rate-independent systems., in
\textit{Handbook of Differential Equations, evolutionary equations,}  
C. Dafermos and E. Feireisl, editors, Elsevier (2005). 

\bibitem{Mie12} A. Mielke, Emergence of rate-independent dissipation from viscous systems with wiggly energies. 
\textit{Contin. Mech. Thermodyn.} 24 (2012), 591--603. 
(doi:\,10.1007/s00161-011-0216-7)

\bibitem{Mie15} A. Mielke, 
Variational approaches and methods for dissipative material models with multiple scales, in 
\textit{Analysis and Computation of Microstructure in Finite Plasticity}. Springer, Berlin (2015),125--155. 

\bibitem{MieTru12} A. Mielke and L.Truskinovsky, 
From discrete visco-elasticity to continuum rate-independent plasticity: rigorous results.
\textit{Arch. Rational Mech. Anal. } 203 (2012), 577--619.

\bibitem{MieRou15} A. Mielke and T. Roubíček, 
 \textit{Rate-Independent Systems. Theory and Application.}
 Springer, Berlin (2015).
 
 \bibitem{Mil88} G.S.P. Miller, The motion dynamics of snakes and worms.
 \textit{ACM Siggraph Computer Graphics} 22 (1988), 169--173. 


\bibitem{NosDeS14} G. Noselli and A. DeSimone, 
A robotic crawler exploiting directional frictional  interactions: experiments, numerics, and derivation of a reduced model.
\textit{Proc. Roy. Soc. London A} 470 (2014).


\bibitem{Pop10} V. Popov, 
\textit{Contact mechanics and friction: physical principles and applications.}
Springer, Berlin (2010).

\bibitem{PopGre14} V.L. Popov and J.A.T. Gray, 
Prandtl-Tomlinson Model: A Simple Model Which Made History, in 
\textit{The History of Theoretical, Material and Computational Mechanics-Mathematics Meets Mechanics and Engineering.}
Springer, Berlin (2014), 153--168.

\bibitem{PugTru05} G. Puglisi and Lev Truskinovsky,
Thermodynamics of rate-independent plasticity.
\textit{J. Mech. Phys. Solids} 53 (2005), 655--679.

\bibitem{RocWet} R.T. Rockafellar and R.J.-B Wets, 
\textit{Variational Analysis}. 
Springer, Berlin (1998).

\bibitem{ZimZei07} K. Zimmermann and I. Zeidis,
Worm-like locomotion as a problem of nonlinear dynamics.
 \textit{J. Theoret. Appl. Mech.} 45 (2007), 179--187.

\end{thebibliography}
\end{document}